\newcommand{\ra}{\rightarrow}
\newcommand{\RR}{\mathbb{R}}
\newtheorem{theorem}{Theorem}
\newtheorem{lemma}{Lemma}
\newtheorem{proposition}[theorem]{Proposition}
\newtheorem{definition}{Definition}
\definecolor{bluegreen}{rgb}{0.0, 0.5, 0.5}
\definecolor{redbrown}{rgb}{0.6, 0.2, 0.2}
\definecolor{bluebrown}{rgb}{0.4, 0.3, 0.4}
\definecolor{redgreen}{rgb}{0.5, 0.5, 0.0}
\newcolumntype{L}[1]{>{\raggedright\let\newline\\\arraybackslash\hspace{0pt}}m{#1}}
\newcolumntype{C}[1]{>{\centering\let\newline\\\arraybackslash\hspace{0pt}}m{#1}}
\newcolumntype{R}[1]{>{\raggedleft\let\newline\\\arraybackslash\hspace{0pt}}m{#1}}
\begin{document}
\onehalfspacing
\begin{titlepage}
\title{Selling Supplemental Information \thanks{I am indebted to Teddy Mekonnen for his continuous guidance, comments, and feedback. I also want to thank Jack Fanning for his comments and suggestions that improved this paper. I am grateful to Alessandro Bonatti, Geoffroy De Clippel,  Klajdi Hoxha, Bobby Pakzad-Hurson, Roberto Serrano, Rajiv Vohra, and Rakesh Vohra for their helpful comments. Finally, I thank the participants at the Brown University Theory Seminar, Bilkent University Microeconomics Seminar, and the Stony Brook Game Theory Conference. All errors are my own.}}
\author{Arlindo Skënderaj \thanks{Brown University, Department of Economics, \href{mailto:arlindo_skenderaj@brown.edu}{\texttt{arlindo\_skenderaj@brown.edu}}.} }
\date{\today}
\end{titlepage}
\maketitle
\begin{center} \href{https://drive.google.com/file/d/1RYgc_EHdTqa6Fy76EVMOmXIpaWEGz7tH/view?usp=share_link}
     {Please click here for the most recent version.}
\end{center}
\bigskip
\begin{abstract}
\noindent
I consider an environment in which a decision maker faces uncertainty and privately owns a signal about the true state of the world. Before observing the signal realization, the decision maker purchases additional information from a data broker. I characterize the data broker’s optimal selling mechanism, which involves screening over all signals. In the binary-action setting, the data broker extracts the efficient surplus by offering a distinct binary signal for each type. In more general environments, I provide conditions on the payoff structure under which efficient surplus extraction is achievable and give an explicit construction of the optimal menu. I also identify environments in which efficient surplus extraction fails and characterize the optimal menu when there are two decision maker types.
 \\

\vspace{0in}
\noindent\textbf{Keywords:} Data, Information Design, Mechanism Design, Multidimensional Screening, Selling Information  \\
\vspace{0in}
\end{abstract}


\newpage
\section{Introduction}
Decision makers are increasingly relying on third-party information to make better decisions. This information is usually sold in the form of datasets, action recommendations, or consulting services. While firms often collect first-party data through direct interactions or internal systems, this information is typically incomplete or noisy. For this reason, they rely on data brokers such as Acxiom, Experian, or Nielsen to obtain additional data. Other decision makers have limited knowledge and therefore may acquire additional information from experts. \\

The increasing demand for information, particularly data, has led to the widespread use of “data append” services, which allow decision makers to enrich existing datasets with additional attributes. For example, in the context of consumer data, the U.S. Federal Trade Commission (2014) reports that data brokers sell information by appending firms’ existing customer records with missing or additional details, such as age, gender, household income, or shopping behavior. These data are then used for consumer segmentation, targeted advertising, and various forms of price discrimination. Importantly, supplemental information does not replace existing information; instead, it is explicitly designed to be combined with it. Decision makers already have a plan based on the information available to them but may rely on experts or intermediaries to make more informed decisions. Examples include a lender evaluating loan applications after preliminary screening, or an employer deciding whether to hire a candidate after observing limited credentials.\\

This paper studies a data broker who sells information to a decision maker. The decision maker faces uncertainty about the state of the world and, consequently, about which action to take. She privately holds partial information in the form of a signal à la \citet{GentzkowKamenica2017}; this signal defines her type. The decision maker acquires additional information from the data broker to improve her decision. The general problem faced by the data broker consists of screening over all signals, which can be challenging for two main reasons. First, this is a multidimensional screening problem. Second, the type space is not totally ordered due to both horizontal and vertical differentiation across signals. The value of any given signal depends on the signal the decision maker already possesses; a particular signal might be highly valuable to one type but entirely uninformative to another. Much of the existing mechanism design literature relies on ranked type spaces where the single-crossing property holds. In this setting, it is unclear whether single-crossing conditions hold in general. Moreover, the decision maker's  type changes as she acquires additional data, as her payoff is derived from a utility-maximization problem.\\

I first analyze a setting in which the decision maker chooses between two actions. I provide an explicit construction of the optimal menu and show that the data broker extracts the efficient surplus: the menu induces each type to take the optimal action, and no type receives positive rent. A key feature of this menu is the \emph{minimal complementary signal}. Given two types $\pi$ and $\pi'$, the $\pi'$-minimal complementary signal to $\pi$ is the least valuable (under the payoff order for $\pi'$) signal that, when combined with $\pi$, induces type $\pi$ to take the optimal action. In the binary-action case, the $\pi'$-minimal complementary signal of $\pi$ depends only on the base signal $\pi$ and consists of only two messages. It can also be interpreted as an action recommendation that advises switching actions whenever the interim action induced by $\pi$ is not optimal, and not switching otherwise. An important property of this environment is that the incremental value of $\pi^c$ as a supplemental signal is maximized when the base signal is $\pi$. By mimicking another type and purchasing that type’s recommendation, the decision maker can choose either to obey or to disobey the recommendation. If she obeys, she switches actions whenever the other type switches. The other type always benefits from switching, whereas this type sometimes benefits and sometimes overturns an optimal action. Furthermore, for any given state, the marginal benefit of being corrected is the same for both types. If she disobeys, the outcome is equivalent to first switching the default action and then obeying. Since switching the default imposes a loss, and the benefit from obeying cannot exceed that of the informed type, disobeying is never better.
 \\

The efficient surplus extraction result does not extend to settings with many actions and states. When moving from two to many actions, the marginal benefit of being corrected generally depends on both the action and the state. As a result, for a given state, the marginal benefit of being corrected in that state differs across types of decision makers. This prevents the data broker from designing a menu that extracts the efficient surplus, as in the binary-action case. I construct an example at the end of Section~4 that illustrates this failure.\\

Nevertheless, efficient surplus extraction remains possible in a large class of cases. Proposition~\ref{generalproposition} provides sufficient conditions on the decision maker’s payoff function for efficient surplus extraction and gives an explicit construction of the optimal menu. The menu assigns each decision maker an \emph{error-correcting refining} signal that sends (i) a \emph{correction} message revealing the state whenever the interim action is suboptimal, and (ii) a \emph{confirmation} message whenever the interim action is optimal.
\\

The choice of modeling information as a signal à la \cite{GentzkowKamenica2017} is important for this environment. The decision maker augments her private information with supplemental information purchased from the data broker. Hence, it is crucial for the information to be defined in a way that equips it with a clear algebraic structure to combine distinct information structures. Several studies on the optimal sale of information represent information through Blackwell experiments (e.g., \cite{bergemann18}) or market segmentation (e.g., \cite{Yang2022}). However, these approaches do not provide an algebraic structure over the set of all information structures and are therefore not suitable for the environment considered in this paper. In contrast, representing information as a signal allows for a well-defined operation of combining two signals and analyzing their joining informational content. Moreover, it also captures how observations from different sources of information are correlated.

\subsection{Illustrative examples}
\subsubsection{Example: Monopolist buying consumer data.} 

A monopolist sells concert tickets to two consumer groups: students and professionals, with a willingness to pay of \$50 and \$100, respectively. The aggregate market consists of 30\% students and 70\% professionals. Suppose the monopolist owns a residential dataset that indicates whether a consumer lives in a house or an apartment. This information allows the monopolist to segment the market into two groups: one consisting mostly of professionals and one that includes a mix of both types. Specifically, 65\% of the market lives in houses, of which 14\% are students and 86\% are professionals; the remaining 35\% live in apartments, with 60\% students and 40\% professionals. Using this segmentation, the monopolist achieves a profit of  
\[
\frac{65}{100} \cdot 86 + \frac{35}{100} \cdot 50 = 73.4.
\]
In contrast, if the monopolist had no access to any dataset, she would set a uniform price of \$100 and sell only to professionals, resulting in a profit of \$70. If the monopolist had full information, she would be able to perfectly price discriminate and extract the entire surplus from the market, yielding a profit of
\[
50 \cdot 0.3 + 100 \cdot 0.7 = 85.
\]

Now suppose a data broker can sell additional information to the monopolist. The data broker does not know whether the monopolist already owns the residential dataset but believes she does with probability \( \frac{1}{2} \). To maximize revenue, the broker offers a menu of datasets:
\[
\mathcal{M} = \{(\text{fully-revealing data, \$15}), (\text{civil-status data, \$11.6})\}.
\]
Assume that only students who live in houses and professionals who live in apartments are married. Under this assumption, the civil-status dataset complements the residential dataset in such a way that their combination reveals the consumer type perfectly.\\

A monopolist with no prior data purchases the fully-revealing dataset and pays \$15, which is her willingness to pay for full information. Her willingness to pay for the civil-status data alone is zero. A monopolist who already owns the residential dataset purchases the civil-status data instead, because together they yield full segmentation, and her willingness to pay for this supplemental information is exactly \$11.6. In this way, the data broker successfully screens both types and extracts all information rents from each.

\subsubsection{Clinical diagnosis of a patient.} A physician faces three conditions $\Omega=\{\omega_1,\omega_2,\omega_3\}$ (bacterial pneumonia, viral pneumonia, lung cancer), each with prior probability $1/3$. There are three treatments $A=\{a_1,a_2,a_3\}$ (antibiotics, antivirals, chemotherapy). Payoffs are diagonal: 
\[
u(a_i,\omega_j)=
\begin{cases}
10, & i=j,\\
0, & i\neq j.
\end{cases}
\]
A bacterial culture test perfectly identifies bacterial pneumonia, whereas imaging does not distinguish viral pneumonia from early-stage lung cancer (both can produce similar opacities). The resulting partition is $\{\{\omega_1\},\{\omega_2,\omega_3\}\}$. When the culture is positive, the physician prescribes $a_1$ (antibiotics). When imaging is ambiguous, she chooses one treatment—say $a_2$ (antivirals)—which is correct in $\omega_2$ and incorrect in $\omega_3$. The resulting expected utility is $10/3+10/3+0=20/3$. Without any additional information, prescribing $a_1$ yields $10/3$.\\

Suppose a private laboratory can sell any diagnostic test but is uncertain whether the physician can run the culture test in house. The lab offers a menu of recommendation signals: conditional on the physician’s tentative treatment, the signal either (i) confirms it is correct or (ii) indicates it is incorrect and reveals the true state. Price the “informed-type” recommendation at $10/3$ and the “uninformed-type” recommendation at $20/3$.\\

Incentives are as follows. For the informed type (with partition $\{\{\omega_1\},\{\omega_2,\omega_3\}\}$), the recommendation delivers full information and raises expected utility from $20/3$ to $10$, so the incremental value is $10/3$; paying $10/3$ leaves zero rent. Pruchasing the uninformed type's recommendation, priced at $20/3$ is unprofitable.  For the uninformed type (who would otherwise prescribe $a_1$), the recommendation also yields $10$, i.e., an incremental value of $20/3$; paying $20/3$ leaves zero rent. Moreover, the uninformed type does not gain by imitating the informed type: that signal corrects only the mistakes defined relative to the informed type’s baseline (it reveals in $\omega_3$ but not in $\omega_2$), yielding at most the same incremental value $10/3$. 
\\

When can the data broker extract the efficient surplus?
With two actions, efficient surplus extraction is feasible for any belief about the decision maker’s private information. Nevertheless, this result generally does not extend to environments with multiple actions and states. Section 4 provides sufficient conditions under which efficient surplus extraction is achievable in these cases.\\

The structure of the paper is as follows: In Section 2, I provide related literature.  In Section 3, I introduce the model, and in Section 4, I conduct the main analysis. The conclusion is presented in Section 5. Proofs are provided in the Appendix.

\section{Related Literature} 
This paper relates to the literature on information design, and in particular to the optimal sale of information. Seminal work includes \cite{admati1986}, \cite{admati1990}, and \cite{allen1990}, which examine how a data broker can optimally design information structures for imperfectly informed decision makers. For an overview of markets for information, see \cite{BergemannBonatti2019}. Recent work, such as \cite{Babaioff,bergemann18,liu2021optimal,Yang2022,segura-rodriguez22,hoxha2024selling}, study environments in which a data broker sells information to an agent facing a decision problem under uncertainty. In \cite{Babaioff}, the data broker sells information to a decision maker who has a private payoff function. The information is sold after the data broker observes the realized state of the world. In \cite{liu2021optimal}, the data broker sells information to a decision maker who takes passive and active action. In their case, the payoff depends on the state only under the active action, reducing the problem to one dimension.  \cite{Yang2022} studies the optimal selling mechanism of a data broker who sells a signal to a monopolist with a privately known constant marginal cost.  \cite{segura-rodriguez22} studies a data broker who sells information about consumers to a monopolist who wants to predict one characteristic; which characteristic she wants to predict and how much she values the information are both private. \cite{hoxha2024selling} considers an environment in which the data broker sells information at the interim stage to a decision maker facing correlated states. This paper is closest to \cite{bergemann18}, who analyze a setting in which the broker sells Blackwell experiments to a decision maker whose private information is an interim belief about the true state; the signal is used to improve decision choices across states. In contrast, I study an environment in which the decision maker’s private information is the signal she owns, which represents a distribution over her interim beliefs. This captures situations where the decision maker must make frequent decisions and purchases information before the interim belief is realized. The key difference is that the monopolist’s private information is itself a signal structure, and the analysis focuses on the ex-ante value of refining that information. \\

On the literature of persuasion without transfers, \cite{kolotilin17}, \cite{kolotilin2018optimal}, \cite{krahmer20}, \cite{candogan2023optimal}, \cite{yamashita21}, \cite{ichihashi2025data}, and \cite{zhu23} analyze optimal mechanisms without transfers for providing information to privately informed agents.  Other studies have analyzed information transmission from experts to customers (see \cite{priceofadvice}, \cite{milgrom1986}, \cite{pitchik1987honesty}). \cite{esHo2007optimal} analyze how a seller should reveal information to bidders in an auction and show that a handicap auction can implement the revenue-maximizing information disclosure policy. \cite{bergemann2015} studies how a data broker sells consumer tracking information (cookies) to advertisers, and how the sale of that information affects advertising strategies and market  outcomes. In their paper, the price of the information was determined in a competitive market, and the data buyer did not have any private information. \cite{bergemann15} analyze the welfare outcomes that a designer can implement by offering any signal structure. They show that when the designer knows the monopolist's type, any division of the total surplus between the buyer and the seller can be achieved, provided that the total surplus does not exceed the efficient surplus, the monopolist’s profit is at least as much as the profit without information, and the buyer’s surplus is non-negative. \cite{krahmer20} studies mechanism design problems with quasi-linear utility, where the principal can design and disclose additional information that affects agents' valuations. They show that the principal can design information and a mechanism to fully extract the efficient surplus in a large class of cases. However, in these papers the agent does not face a decision problem under uncertainty. \\

This paper also relates to the literature on quality degradation (see \cite{Mussa1978} and \cite{Maskin1984}). \cite{Mussa1978} analyze how a monopolist can use product quality differentiation to maximize profits by setting different qualities and prices for different consumer segments. \cite{Maskin1984} study how a monopolist can design optimal incentive schemes under asymmetric information by offering quantity discounts tailored to consumers with private information about their preferences. In the current paper, the data seller's optimal mechanism may involve degrading the quality of information for some buyers to charge higher prices to others.\\

Methodologically, this paper contributes to the literature on multidimensional screening (\cite{rochet2003economics}, \cite{haghpanah2025screening}, and \cite{daskalakis2015strong}). Unlike standard screening problems, the single-crossing property does not hold in this setting, which makes the problem more challenging to analyze using existing tools. Moreover, the decision maker’s type changes as she acquires additional information, as her type is  endogenously determined through a utility maximization problem.


\section{Model} \label{sec:data}
A decision maker \emph{(she)} faces uncertainty about the state of the world $\omega$, which is drawn from a finite set $\Omega = \{\omega_1,\omega_2,...,\omega_n\}$. The decision maker chooses an action $a$ from the set of actions $A=\{a_1,a_2,...,a_m\}$. The payoff function given the action and the state is
\begin{equation*}
    u: \, A \times \Omega \rightarrow \RR.
\end{equation*}
The payoff function can also be represented by the following matrix:
\[
\begin{array}{c|cccc}
u(a_i, \omega_j) & \omega_1 & \omega_2 & \cdots & \omega_n \\ \hline
a_1 & u_{11} & u_{12} & \cdots & u_{1n} \\
a_2 & u_{21} & u_{22} & \cdots & u_{2n} \\
\vdots & \vdots & \vdots & \ddots & \vdots \\
a_m & u_{m1} & u_{m2} & \cdots & u_{mn}
\end{array}
\]
Assume that there is a unique optimal action in each state. That is,
\begin{equation*}
    \forall \omega \in \Omega, \exists a^*(\omega) \in A, \, u(a^*(\omega),\omega) > u(a,\omega), \, \forall a\in A, a\neq a^*(\omega). 
\end{equation*}
Let $\mu_0 \in \Delta \Omega$ be the common prior on $\Omega$. For any belief $\mu \in \Delta(\Omega)$, the decision maker takes an optimal action $a(\mu)$ that solves
\begin{equation*}
   \max_{a \in A} \sum_{\omega \in \Omega} \mu(\omega)\, u(a,\omega).
\end{equation*}
\subsection{Information}
Besides the prior, the decision maker privately holds additional information that enables her to make better decisions. I model the information as a signal $\pi$, following \cite{GentzkowKamenica2017}: a \textit{signal} is a finite partition of the extended state space $\Omega \times X$, where $X$ is a random variable, independent of $\omega$ and uniformly distributed on $[0,1]$. Let $S$ denote the collection of nonempty Lebesgue-measurable subsets of $\Omega \times X$, and write $\pi \subset S$.  An element $s \in \pi$ is called a \textit{message} (or a \textit{signal realization} ).
Define
\begin{equation*}
    \mu(s \mid \omega) = \lambda\{x \mid (\omega,x)\in s\} \quad \text{ and } \quad  \mu(s) = \sum_{\omega \in \Omega} \mu(s \mid \omega)\,\mu(\omega),
\end{equation*}
where $\lambda(\cdot)$ denotes Lebesgue measure. For any $\omega \in \Omega$ define 
\begin{equation*}
    s(\omega) = \{ x\in [0,1] \mid (\omega,x) \in s\}. 
\end{equation*}
Moreover, $\mu(s \mid \omega)$ denotes the conditional probability of receiving $s$ in state $\omega$, and $\mu(s)$ denotes the unconditional probability of receiving $s$.  For each $\omega \in \Omega$, the sets $\{s(\omega)\}_{s \in \pi}$ form a partition of $[0,1]$. That is,
\begin{equation}\label{partition}
    \sum_{s \in \pi} \mu(s \mid \omega) = 1 \quad \forall \, \omega \in \Omega.
\end{equation}
Let $\Pi$ denote the set of all signals, and let $\overline{\pi} =\{(\omega,[0,1]) \mid \omega \in \Omega\}$ denote the fully revealing signal. Modeling information as in \cite{GentzkowKamenica2017} provides an algebraic structure over $\Pi$ that allows us to “add” signals and analyze their joint information content. \footnote{For more on the comparison of signals \cite{Brooks2024}.} It also captures how observations from one source correlate with observations from others. \footnote{Several studies on selling information represent information as Blackwell experiments \cite{bergemann18} or as market segmentations \cite{Yang2022}. These representations lack a convenient algebraic structure over the space of information structures and are therefore less suitable for the environment of this paper.}

\begin{figure}
\begin{center}
\scalebox{0.9}{%
\begin{tabular}{c c}
    $\omega_1$ & $\omega_2$ \\
    \begin{tikzpicture}
        \node[left] at (0,0) {$\pi:$\quad};
        \draw[thick] (0,0) -- (4,0);
        \draw[blue, thick] (0,0) -- (3,0);
        \draw[red, thick] (3,0) -- (4,0);
        \draw[blue]
            (0,0) -- (3,0) node[midway, yshift=-0.5cm] {\textcolor{blue}{$s_1$}};
        \draw[red]
            (3,0) -- (4,0) node[midway, yshift=-0.5cm] {\textcolor{red}{$s_2$}};
        \filldraw[black] (0,0) circle (2pt) node[below] {$0$};
        \filldraw[black] (3,0) circle (2pt);
        \filldraw[black] (4,0) circle (2pt) node[below] {$1$};
    \end{tikzpicture} &
    \begin{tikzpicture}
        \draw[thick] (0,0) -- (4,0);
        \draw[blue, thick] (0,0) -- (1,0);
        \draw[red, thick] (1,0) -- (4,0);
        \draw[blue]
            (0,0) -- (1,0) node[midway, yshift=-0.5cm] {\textcolor{blue}{$s_1$}};
        \draw[red]
            (1,0) -- (4,0) node[midway, yshift=-0.5cm] {\textcolor{red}{$s_2$}};
        \filldraw[black] (0,0) circle (2pt) node[below] {$0$};
        \filldraw[black] (1,0) circle (2pt);
        \filldraw[black] (4,0) circle (2pt) node[below] {$1$};
    \end{tikzpicture} \\
\end{tabular}
}
\end{center}
\caption{A signal.}
\label{signal}
\end{figure}
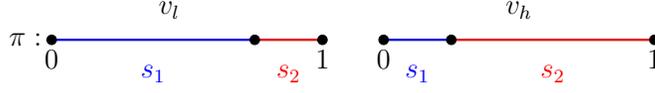
\subsubsection{Value of supplemental information}
If the decision maker owns a signal $\pi$, then for each message $s\in \pi$, she will update her prior belief $\mu_0$ and choose the optimal action $a(s)$ that solves 
\begin{equation}
     \max_{a \in A} \sum_{\omega \in \Omega} \mu(\omega \mid s)\, u(a,\omega).
\end{equation}
Her interim utility conditional on receiving a message $s$ is
\begin{equation}
    u(s) = \sum_{\omega \in \Omega} \mu(\omega \mid s) u(a(s), \omega).
\end{equation}
Her expected utility from a signal $\pi$ is therefore given by 
\begin{equation}
    U(\pi) = \sum_{s\in \pi}  \mu(s) \, u(s),
\end{equation}
Her expected utility from always taking the optimal action is \footnote{Note that the decision maker does not necessarily need a fully revealing signal to always take the optimal action. To see this, consider three states and two actions, where the first action is optimal in the first two states and the second action is optimal in the third. A signal that separates state 3 from the first two states does not fully reveal the state, yet it still induces the optimal action in every state.}
\begin{equation}
   \overline{U} = \sum_{\omega \in \Omega} \mu(\omega) \, u(a^*(\omega), \, \omega). 
\end{equation}
Next, given any two signals $\pi,\pi' \in \Pi$, define their \emph{join} $\pi \lor \pi'$ by
\begin{equation}
    \pi \lor \pi' = \{s \cap s' \mid s\in \pi, s' \in \pi'\}. 
\end{equation}
That is, $\pi \lor \pi'$ is the signal that gives the same information as observing the two signals $\pi$ and $\pi'$. The join to two signals is illustrated in Figure~\ref{signaljoin}. The expected utility of $\pi \lor \pi'$ is
\begin{equation}
    U(\pi \lor \pi') = \sum_{s\in \pi} \sum_{s' \in \pi'} \mu(s\cap s') u(s \cap s'). 
\end{equation}
The net value from adding $\pi'$ to $\pi$ is
\begin{equation}
    V(\pi' \mid \pi) = U(\pi' \lor \pi) - U(\pi). 
\end{equation}

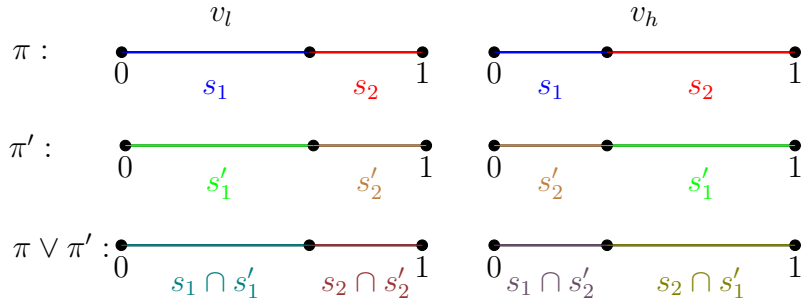
\begin{figure}[H]
\centering
\scalebox{1}{%
\begin{tabular}{c c}
    $\omega_1$ & $\omega_2$ \\
    \begin{tikzpicture}[scale=1]
        \node[left] at (0,0) {$\pi:$\qquad \qquad};
        \draw[thick] (0,0) -- (4,0);
        \draw[blue, thick] (0,0) -- (2.5,0);
        \draw[red, thick] (2.5,0) -- (4,0);
        \draw[blue]
        (0,0) -- (2.5,0) node[midway, yshift=-0.5cm] {$s_1$};
        \draw[red]
        (2.5,0) -- (4,0) node[midway, yshift=-0.5cm] {$s_2$};
                \filldraw[black] (2.5,0) circle (2pt);
        \filldraw[black] (0,0) circle (2pt) node[below] {$0$};
        \filldraw[black] (4,0) circle (2pt) node[below] {$1$};
    \end{tikzpicture} &
    \begin{tikzpicture}[scale=1]
        \draw[thick] (0,0) -- (4,0);
        \draw[blue, thick] (0,0) -- (1.5,0);
        \draw[red, thick] (1.5,0) -- (4,0);
        \draw[blue]
        (0,0) -- (1.5,0) node[midway, yshift=-0.5cm] {$s_1$};
        \draw[red]
        (1.5,0) -- (4,0) node[midway, yshift=-0.5cm] {$s_2$};
        \filldraw[black] (1.5,0) circle (2pt);
        \filldraw[black] (0,0) circle (2pt) node[below] {$0$};
        \filldraw[black] (4,0) circle (2pt) node[below] {$1$};
    \end{tikzpicture} \\
    \begin{tikzpicture}[scale=1]
        \node[left] at (0,0) {$\pi':$\qquad \qquad};
        \draw[thick] (0,0) -- (4,0);
        \draw[green]
        (0,0) -- (1.5,0) node[midway, yshift=-0.5cm] {$s_1'$};
        \draw[brown]
        (1.5,0) -- (4,0) node[midway, yshift=-0.5cm] {$s_2'$};
        \filldraw[black] (0,0) circle (2pt) node[below] {$0$};
        \filldraw[black] (4,0) circle (2pt) node[below] {$1$};
        \filldraw[black] (1.5,0) circle (2pt);
    \end{tikzpicture} &
    \begin{tikzpicture}[scale=1]
        \draw[thick] (0,0) -- (4,0);
        \draw[brown, thick] (0,0) -- (3,0);
        \draw[green, thick] (3,0) -- (4,0);
        \draw[brown]
        (0,0) -- (3,0) node[midway, yshift=-0.5cm] {$s_2'$};
        \draw[green]
        (3,0) -- (4,0) node[midway, yshift=-0.5cm] {$s_1'$};
        \filldraw[black] (3,0) circle (2pt);
        \filldraw[black] (0,0) circle (2pt) node[below] {$0$};
        \filldraw[black] (4,0) circle (2pt) node[below] {$1$};
    \end{tikzpicture} \\
    \begin{tikzpicture}[scale=1]
        \node[left] at (0,0) {$\pi\lor\pi':$};
        \draw[thick] (0,0) -- (4,0);
        \draw[bluegreen, thick] (0,0) -- (1.5,0);
        \draw[redbrown, thick] (2.5,0) -- (4,0);
        \draw[bluegreen]
        (0,0) -- (1.5,0) node[midway, yshift=-0.5cm] {$s_1 \cap s_1'$};
        \draw[bluebrown]
        (1.5,0) -- (2.5,0) node[midway, yshift=-0.5cm] {$s_1\cap s_2'$};
        \draw[redbrown]
        (2.5,0) -- (4,0) node[midway, yshift=-0.5cm] {$s_2 \cap s_2'$};
        \filldraw[black] (1.5,0) circle (2pt) ;
        \filldraw[black] (2.5,0) circle (2pt) ;
        \filldraw[black] (0,0) circle (2pt) node[below] {$0$};
        \filldraw[black] (4,0) circle (2pt) node[below] {$1$};
    \end{tikzpicture} &
    \begin{tikzpicture}[scale=1]
        \draw[thick] (0,0) -- (4,0);
        \draw[bluebrown, thick] (0,0) -- (1.5,0);
        \draw[redgreen, thick] (1.5,0) -- (4,0);
        \draw[bluebrown]
        (0,0) -- (1.5,0) node[midway, yshift=-0.5cm] {$s_1 \cap s_2'$};
        \draw[redbrown]
        (1.5,0) -- (3,0) node[midway, yshift=-0.5cm] {$s_2 \cap s_2'$};
        \draw[redgreen]
        (3,0) -- (4,0) node[midway, yshift=-0.5cm] {$s_2 \cap s_1'$};
        \filldraw[black] (1.5,0) circle (2pt);
        \filldraw[black] (3,0) circle (2pt);
        \filldraw[black] (0,0) circle (2pt) node[below] {$0$};
        \filldraw[black] (4,0) circle (2pt) node[below] {$1$};
    \end{tikzpicture} \\
\end{tabular}
}
\caption{The join of two signals.}
\label{signaljoin}
\end{figure}

\subsection{Data broker's problem} 
The data broker screens the decision maker based on the signal she already has. Let $\mathcal{T}\subseteq\Pi$ denote the decision maker’s type space, with types distributed according to $F$ on $\mathcal{T}$. By the revelation principle the data broker can restrict attention to direct mechanism which offers a signal $\sigma(\pi)$ at a price $t(\pi)$ to each type $\pi \in \mathcal{T}$.  Timing is as follows:
\begin{itemize}
  \item[(i)] The data broker posts a menu $\mathcal{M}=\{(\sigma(\pi),\,t(\pi))\}_{\pi \in \mathcal{T}}$ of signal–price pairs before the true state is realized.
  \item[(ii)] A decision maker of type $\pi$ purchases a signal $\sigma(\pi)$ and pays $t(\pi)$, before her private message is realized.
  \item[(iii)] The decision maker observes a signal $s$ from $\pi$ and another signal $s'$ from $\sigma(\pi)$ and chooses an action $a \in A$. 
\end{itemize}
Define
\begin{equation*}
    V(\pi) \triangleq  V(\sigma(\pi) \mid \pi) = U(\sigma(\pi) \lor \pi) - U(\pi),
\end{equation*}
to be the net benefit of type $\pi$ from reporting truthfully. The data broker chooses a menu to maximize expected revenue
\begin{equation}\label{eq:P}
    \int_{\mathcal{T}} t(\pi)\,\mathrm{d}F(\pi),
\end{equation}
subject to incentive compatibility constraints
\begin{equation*}\tag{IC}\label{IC}
    V(\pi) - t(\pi) \, \ge\, V(\sigma(\pi')\mid \pi) - t(\pi'), \quad \forall\, \pi,\pi' \in \mathcal{T},
\end{equation*}
and participation constraints
\begin{equation*}\tag{IR}\label{IR}
     V(\pi) - t(\pi) \, \ge\, 0, \quad \forall\, \pi \in \mathcal{T}.
\end{equation*}

\section{Analysis}
This section characterizes the optimal menu for any given distribution over the type space. In a binary action setting, the data broker extracts the first best surplus. Each type receives a distinct signal with binary messages. In environments with many actions and states, efficient surplus extraction generally fails. I provide conditions on the payoff structures under which the efficient surplus extraction is implementable. I define the efficient surplus extraction below. 

\begin{definition}
The data broker extracts the efficient surplus from the decision maker if there exists a feasible menu
\[
\mathcal{M}^* = \{(\sigma^*(\pi), t^*(\pi))\}_{\pi \in \mathcal{T}}
\]
such that, for every $\pi \in \mathcal{T}$, the induced posterior satisfies $U(\sigma^*(\pi) \lor \pi) = \overline{U}$,  and the transfer satisfies $t^*(\pi) = \overline{U} - U(\pi)$.
\end{definition}

\subsection{Two types}
Consider $\mathcal{T} = \{\pi_L, \pi_H\},$ with $U(\pi_L) > U(\pi_H)$. Interpret the high type as the less informed type, i.e., the one with the higher willingness to pay for full information. Data broker offers a menu $(\sigma_L,t_L),(\sigma_H,t_H)$ that solves
\begin{equation}\tag{$\mathcal{P}$}\label{eq:P}
\begin{aligned}
    & \underset{((\sigma_H,t_H),(\sigma_L,t_L))}{\text{max}} \rho \ t_H+ (1-\rho) \ t_L\\
    & \text{subject to}\\
    & (IC_H): \quad V_H(\sigma_H) - t_H \geq V_H(\sigma_L) - t_L, \\
    & (IC_L): \quad V_L(\sigma_L) - t_L \geq V_L(\sigma_H) - t_H, \\
    &(IR_H): \quad V_H(\sigma_H) - t_H \geq 0,  \\
    &(IR_L): \quad V_L(\sigma_L) - t_L \geq 0, 
\end{aligned}
\end{equation}
where $V_i (\sigma_j) = V(\sigma_j \mid \pi_i)$ for $i,j \in \{L,H\}$.
\begin{lemma}\label{reduceproblem}
The optimal menu $(\sigma_L,t_L),(\sigma_H,t_H)$ is such that 
\begin{itemize}
    \item [a.] The high type always take the optimal action i.e., $U(\sigma_H \lor \pi_H) = \overline{U}$.
    \item [b.] $IR_L$ binds. 
    \item [c.] If the efficient surplus extraction is not possible, then $IC_H$ binds. 
\end{itemize}
\end{lemma}
The proof is provided in Appendix~\ref{appendix:reduceproblem} and it follows an argument similar to that in \cite{haghpanah2025screening}. Lemma~\ref{reduceproblem} reduces the problem to 
\begin{equation*}
\begin{aligned}
    & \underset{((\sigma_H,t_H),(\sigma_L,t_L))}{\text{max}} \rho \ t_H+ (1-\rho) \ V_L(\sigma_L)\\
    & \text{subject to}\\
    & (IC_H): \quad V_H(\overline{\pi}) - t_H \geq V_H(\sigma_L) - V_L(\sigma_L), \\
    &(IR_H): \quad V_H(\overline{\pi}) - t_H \geq 0,  \\
    & (IC_L): \quad  V_L(\overline{\pi}) - t_H \leq 0. 
\end{aligned}
\end{equation*}
From Lemma~\ref{reduceproblem}(c), if the efficient surplus extraction is not possible, the data broker sets $t_H = V_H(\overline{\pi}) - V_H(\sigma_L) + V_L(\sigma_L)$. The problem reduces further to only choosing a $\sigma_L^*$ that solves
    \begin{equation}
    \max_{\sigma_L \in \Pi} \rho \, \big(V_H(\overline{\pi}) - V_H(\sigma_L) + V_L(\sigma_L)\big) +(1-\rho) \,V_L(\sigma_L).
\end{equation}
Since $V_i(\sigma_L) =  U(\sigma_L \lor \pi_i) - U(\pi_i)$ for $i \in \{L,H\}$, the problem is equivalent to 
\begin{equation}\label{reducedproblem}
    \max_{\sigma_L \in \Pi} U(\sigma_L \lor \pi_L) - \rho \, U(\sigma_L \lor \pi_H).
\end{equation}
The following proposition provides the sufficient and necessary conditions under which the efficient surplus extraction is possible. Before that consider the following definition of a minimal complementary signal. 
\begin{definition}
Given two signals $\pi,\pi' \in \Pi$, the \textbf{$\pi'$-minimal complementary signal of  $\pi$}, denoted by $\pi^c(\pi')$ is defined as 
\begin{equation}\label{quasiminimalcomplementarysignal}
\pi^c(\pi') \in \arg\min_{\tilde{\pi} \in \Pi} U(\pi' \lor \tilde{\pi})
\quad \text{s.t.}\quad
U(\tilde{\pi} \lor \pi) = \overline{U}.
\end{equation}
\end{definition}
\begin{proposition}
Let $\mathcal{T} = \{\pi_L,\pi_H\}$. Data broker extracts the efficient surplus if and only if 

\begin{equation*}\label{fullsurplusextraction}
    V_L(\pi^c_L(\pi_H)) \geq V_H(\pi^c_L(\pi_H)).
\end{equation*}
\end{proposition}
The proof is in the Appendix~\ref{appendix:fullsurplusextraction}.

\subsection{Binary actions}
Consider $A=\{a_1,a_2\}$, $\Omega = \{\omega_1,...,\omega_n\}$ and the utility function $u: A 
\times \Omega \ra \RR$ to be such that
\[
\begin{array}{c|cccc}
u(a,\omega) & \omega_1 & \omega_2 & \cdots & \omega_n \\ \hline
a_1 & u_{11} & u_{12} & \cdots & u_{1n} \\
a_2 & u_{21} & u_{22} & \cdots & u_{2n}
\end{array}
\]
Let the type space $\mathcal{T}$ follow any distribution $F$ over $\Pi$. 
For $\omega\in\Omega$, write $u_1(\omega)\triangleq u(a_1,\omega)$ and $u_2(\omega)\triangleq u(a_2,\omega)$. Since in each state there is a unique optimal action, we can partition the state space into 
\begin{equation*}
    \Omega_1=\{\omega\in\Omega:\; u_1(\omega) > u_2(\omega)\}
    \quad\text{and}\quad
    \Omega_2=\{\omega\in\Omega:\; u_2(\omega)>u_1(\omega)\}.
\end{equation*}
This aggregation reduces the problem to two regions in which each action is uniquely optimal, so the analysis follows the same logic as the binary state and binary action case.
\subsubsection{Minimal complementary signal.}
I begin by identifying the conditions under which efficient surplus extraction is possible. A key feature of the binary-action setting is that, for any two signals $\pi$ and $\pi'$, the $\pi'$-minimal complementary signal of $\pi$ is binary and independent of $\pi'$. The following lemma formalizes this observation.

\begin{lemma}\label{binary}
Let $|A| = 2$, and let $\pi \in \Pi$ be a non–fully revealing signal. For any $\pi' \in \Pi$, the $\pi'$-minimal complementary signal of $\pi$ is binary and does not depend on $\pi'$. I  denote this signal by $\pi^c$.
\end{lemma}

The proof is in the (Appendix \ref{appendix:binary}). With two actions, the minimal complementary signal has a simple interpretation: its messages can be interpreted as “don’t switch” when the interim action induced by $\pi$ is optimal, and “switch” otherwise. That is, a signal can also be interpreted as an action recommendation.  With two actions, the minimal complementary signal has a simple interpretation: its messages can be interpreted as “don’t switch” when the interim action induced by $\pi$ is optimal, and “switch” otherwise. That is, a signal can also be interpreted as an action recommendation. The following lemma characterizes an important property of the minimal complementary signal in the binary-action case. In particular, the incremental value of $\pi^c$ as a supplemental signal is maximized when the base signal is $\pi$.

\begin{figure}
\begin{center}
    \scalebox{1}{%
\begin{tabular}{c c c}
    $\omega_1$ & $\omega_2$ \\
   \begin{tikzpicture}
        \node[left] at (0,0) {$\pi:$\quad};
        \draw[thick] (0,0) -- (4,0);
        \draw[blue, thick] (0,0) -- (2.5,0);
        \draw[red, thick] (2.5,0) -- (4,0);
        \filldraw[black] (2.5,0) circle (2pt) node[below] {$a$};
        \filldraw[black] (0,0) circle (2pt) node[below] {$0$};
        \filldraw[black] (4,0) circle (2pt) node[below] {$1$};
        \draw[blue]
        (0.2,0) -- (2.4,0) node[midway, yshift=-0.5cm] {\textcolor{blue}{$s_1$}};
        \draw[red]
        (2.6,0) -- (3.9,0) node[midway, yshift=-0.5cm] {\textcolor{red}{$s_2$}};
    \end{tikzpicture} &
    \begin{tikzpicture}
        \draw[thick] (0,0) -- (4,0);
        \draw[blue, thick] (0,0) -- (1.5,0);
        \draw[red, thick] (1.5,0) -- (4,0);
        \filldraw[black] (1.5,0) circle (2pt) node[below] {$b$};
        \filldraw[black] (0,0) circle (2pt) node[below] {$0$};
        \filldraw[black] (4,0) circle (2pt) node[below] {$1$};
        \draw[blue]
        (0.2,0) -- (1.3,0) node[midway, yshift=-0.5cm] {\textcolor{blue}{$s_1$}};
        \draw[red]
        (1.7,0) -- (3.8,0) node[midway, yshift=-0.5cm] {\textcolor{red}{$s_2$}};
    \end{tikzpicture} \\
   \begin{tikzpicture}
        \node[left] at (0,0) {$\pi^c:$\quad};
        \draw[thick] (0,0) -- (4,0);
        \draw[purple, thick] (0,0) -- (2.5,0);
        \draw[cyan, thick] (2.5,0) -- (4,0);
        \filldraw[black] (2.5,0) circle (2pt) node[below] {$a$};
        \filldraw[black] (0,0) circle (2pt) node[below] {$0$};
        \filldraw[black] (4,0) circle (2pt) node[below] {$1$};
        \draw[purple]
        (0.2,0) -- (2.4,0) node[midway, yshift=-0.5cm] {\textcolor{purple}{$\hat{s}_1$}};
        \draw[cyan]
        (2.6,0) -- (3.9,0) node[midway, yshift=-0.5cm] {\textcolor{cyan}{$\hat{s}_2$}};
    \end{tikzpicture} &
   \begin{tikzpicture}
        \draw[thick] (0,0) -- (4,0);
        \draw[cyan, thick] (0,0) -- (1.5,0);
        \draw[purple, thick] (1.5,0) -- (4,0);
        \filldraw[black] (1.5,0) circle (2pt) node[below] {$b$};
        \filldraw[black] (0,0) circle (2pt) node[below] {$0$};
        \filldraw[black] (4,0) circle (2pt) node[below] {$1$};
        \draw[cyan]
        (0.2,0) -- (1.3,0) node[midway, yshift=-0.5cm] {\textcolor{cyan}{$\hat{s}_2$}};
        \draw[purple]
        (1.7,0) -- (3.8,0) node[midway, yshift=-0.5cm] {\textcolor{purple}{$\hat{s}_1$}};
    \end{tikzpicture} 
\end{tabular}
}
\end{center}
\caption{Minimal complementary signal with two actions.}
\end{figure}

\begin{lemma}\label{mainlemma}
Assume $|A|=2$. Let $\pi\in\Pi$ and $\pi^c$ be a minimal complementary signal to $\pi$. Then,
\[
    V(\pi^c \mid \pi) \, \ge\, V(\pi^c \mid \pi') \quad \forall \pi' \in \Pi.
\]
\end{lemma}

The proof is provided in Appendix \ref{appendix:mainlemma}. The lemma implies a key property of binary-action environments: a minimal complementary signal is (weakly) most valuable to the type for which it is designed. Below, I will provide the key intuition for this result using the following example. In the next section, this need not hold when there are more than three states.

\subsubsection{Intuition behind Lemma \ref{mainlemma}}
 Consider two states, good and bad, i.e., $\Omega=\{G,B\}$, and two actions $A=\{g,b\}$. Without loss of generality, normalize $u_{12}=u_{21}=0$, $u_{22}=1$, and $u_{11}=x>0$. The payoffs are given by the following matrix
\[
\begin{array}{c|cc}
   & g & b \\ \hline
G & x & 0 \\
B & 0 & 1
\end{array}
\]
Take any two types $\pi,\pi' \in \Pi$ and let $s_1$ be the union of all messages of $\pi$ that induce action $g$, and $s_2$ the union of all messages of $\pi$ that induce action $b$. Also let $\pi^c = \{\hat{s}_1,\hat{s}_2\}$ be the minimal complementary signal of $\pi$, as shown in Figure~\ref{example}. That is, $\pi^c$ sends message $\hat{s}_2$ whenever the message sent by $\pi$ induces a suboptimal action, and $\hat{s}_1$ whenever the message sent by $\pi$ induces the optimal action. We can think of the signal $\pi^c$ as an action recommendation whose messages have the following interpretation: $\hat{s}_1$ as ``keep the action induced by $\pi$'' and $\hat{s}_2$ as ``switch the action.''  The net benefit of adding $\pi^c$ to $\pi$ is given by 
\begin{equation*}
    V(\pi^c \mid \pi) = \mu(G, s_2) \cdot x + \mu(B, s_1) \cdot 1. 
\end{equation*}
For simplicity, assume that $\pi'$ consists of a single message $s'$. The same intuition extends to an arbitrary signal. Without loss of generality, assume that the optimal action under $s'$ is $b$. A decision maker who has $\pi'$ can mimic $\pi$ by either obeying or disobeying the recommendation of $\pi^c$. The net benefit from obeying the recommendation of $\pi^c$ is 
\begin{equation*}
    V(\pi^c \mid \pi') = \mu(G, s_2)\, x - \mu(B , s_1) \cdot 1.
\end{equation*}
This is clearly no greater than $V(\pi^c \mid \pi)$. When receiving the ``do not switch'' message, neither type gains any benefit. When receiving the ``switch'' message, type $\pi$ always improves her action, whereas type $\pi'$  takes the correct action with some probability and the wrong one with the remaining probability.
\\
On the other hand, type $\pi'$'s net benefit from disobeying the recommendation of $\pi^c$ is given by 
\begin{equation*}
    \begin{split}
        V(\pi^c \mid \pi') & = \mu(G \cap s_1) \, x - \mu(B \cap s_2) \\
        & = [\mu(G) \, x - \mu(B)] + [\mu(B \cap s_1) - \mu(G \cap s_2)\, x].
    \end{split}
    \end{equation*}
Disobeying the ``switch/don't switch'' recommendation is equivalent to first changing the default action from $b$ to $g$, and then obeying the recommendation as if $g$ were the default. First, switching the default action from $b$ to $g$ leads to a loss, since before receiving any information, taking action $b$ yields a higher expected payoff. Second, we have already shown that the benefit from obeying is no greater than $V(\pi^c \mid \pi)$. 
\begin{figure}
 \begin{center}
    \scalebox{1}{%
\begin{tabular}{c c}
    $G$ & $B$ \\
   \begin{tikzpicture}
        \node[left] at (0,0) {$\pi:$\quad};
        \draw[thick] (0,0) -- (4,0);
        \draw[blue, thick] (0,0) -- (2.5,0);
        \draw[red, thick] (2.5,0) -- (4,0);
        \filldraw[black] (2.5,0) circle (2pt) node[below] {$a$};
        \filldraw[black] (0,0) circle (2pt) node[below] {$0$};
        \filldraw[black] (4,0) circle (2pt) node[below] {$1$};
        \draw[blue]
        (0.2,0) -- (2.4,0) node[midway, yshift=-0.5cm] {\textcolor{blue}{$s_1$}};
        \draw[red]
        (2.6,0) -- (3.9,0) node[midway, yshift=-0.5cm] {\textcolor{red}{$s_2$}};
    \end{tikzpicture} &
    \begin{tikzpicture}
        \draw[thick] (0,0) -- (4,0);
        \draw[blue, thick] (0,0) -- (1.5,0);
        \draw[red, thick] (1.5,0) -- (4,0);
        \filldraw[black] (1.5,0) circle (2pt) node[below] {$b$};
        \filldraw[black] (0,0) circle (2pt) node[below] {$0$};
        \filldraw[black] (4,0) circle (2pt) node[below] {$1$};
        \draw[blue]
        (0.2,0) -- (1.3,0) node[midway, yshift=-0.5cm] {\textcolor{blue}{$s_1$}};
        \draw[red]
        (1.7,0) -- (3.8,0) node[midway, yshift=-0.5cm] {\textcolor{red}{$s_2$}};
    \end{tikzpicture} \\
    \begin{tikzpicture}
        \node[left] at (0,0) {$\pi^c:$\quad};
        \draw[thick] (0,0) -- (4,0);
        \draw[purple, thick] (0,0) -- (2.5,0);
        \draw[cyan, thick] (2.5,0) -- (4,0);
        \filldraw[black] (2.5,0) circle (2pt) node[below] {$a$};
        \filldraw[black] (0,0) circle (2pt) node[below] {$0$};
        \filldraw[black] (4,0) circle (2pt) node[below] {$1$};
        \draw[purple]
        (0.2,0) -- (2.4,0) node[midway, yshift=-0.5cm] {\textcolor{purple}{$\hat{s}_1$}};
        \draw[cyan]
        (2.6,0) -- (3.9,0) node[midway, yshift=-0.5cm] {\textcolor{cyan}{$\hat{s}_2$}};
    \end{tikzpicture} &
   \begin{tikzpicture}
        \draw[thick] (0,0) -- (4,0);
        \draw[cyan, thick] (0,0) -- (1.5,0);
        \draw[purple, thick] (1.5,0) -- (4,0);
        \filldraw[black] (1.5,0) circle (2pt) node[below] {$b$};
        \filldraw[black] (0,0) circle (2pt) node[below] {$0$};
        \filldraw[black] (4,0) circle (2pt) node[below] {$1$};
        \draw[cyan]
        (0.2,0) -- (1.3,0) node[midway, yshift=-0.5cm] {\textcolor{cyan}{$\hat{s}_2$}};
        \draw[purple]
        (1.7,0) -- (3.8,0) node[midway, yshift=-0.5cm] {\textcolor{purple}{$\hat{s}_1$}};
    \end{tikzpicture} \\
        \begin{tikzpicture}
        \node[left] at (0,0) {$\pi':$\quad};
        \draw[thick] (0,0) -- (4,0);
        \draw[brown, thick] (0,0) -- (4,0);
        \filldraw[black] (0,0) circle (2pt) node[below] {$0$};
        \filldraw[black] (4,0) circle (2pt) node[below] {$1$};
        \draw[brown]
        (0.2,0) -- (3.9,0) node[midway, yshift=-0.5cm] {\textcolor{brown}{$s'$}};
    \end{tikzpicture} &
   \begin{tikzpicture}
        \draw[thick] (0,0) -- (4,0);
        \draw[brown, thick] (0,0) -- (4,0);
        \filldraw[black] (0,0) circle (2pt) node[below] {$0$};
        \filldraw[black] (4,0) circle (2pt) node[below] {$1$};
        \draw[brown]
        (0.2,0) -- (3.9,0) node[midway, yshift=-0.5cm] {\textcolor{brown}{$s'$}};
    \end{tikzpicture} 
\end{tabular}
}
\end{center}
\caption{Intuition of Lemma~\ref{mainlemma}. }
\label{example}
\end{figure}
\begin{proposition}\label{mainproposition}
Assume $\Omega = \{\omega_1,...,\omega_n\}$, $A = \{a_1,a_2\}$, $\mathcal{T}=\Pi$ and $F\sim \mathcal{T}$.  The data broker extracts the efficient surplus from the decision maker. In particular, the optimal menu is given by
\begin{equation*}
    \mathcal{M} = \{(\pi^c, t({\pi})) \mid \pi \in \Pi\},
\end{equation*}
where 
\begin{equation*}
    t(\pi) = \overline{U} - U(\pi).
\end{equation*}

\end{proposition}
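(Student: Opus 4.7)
The plan is to verify that the proposed menu $\mathcal{M}=\{(\pi^c,t(\pi))\}_{\pi\in\Pi}$ (i) delivers the efficient surplus by construction, (ii) satisfies individual rationality, and (iii) satisfies incentive compatibility, with the last step being the only nontrivial one and resting entirely on Lemma~\ref{mainlemma}.

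First, I would argue that $\mathcal{M}$ implements the efficient outcome. By the definition of the minimal complementary signal, $\pi\lor\pi^c$ is fully revealing, so $U(\pi\lor\pi^c)=\overline{U}$ and hence $t(\pi)=\overline{U}-U(\pi)$. This matches exactly the definition of efficient surplus extraction, once feasibility is established.

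Next, I would check \eqref{IR}. For type $\pi$, the payoff from truthfully reporting is
\[
V(\pi^c\mid\pi)-t(\pi)=\bigl(U(\pi\lor\pi^c)-U(\pi)\bigr)-\bigl(U(\pi\lor\pi^c)-U(\pi)\bigr)=0,
\]
so the participation constraint binds. The core of the argument is \eqref{IC}. Since each on-path payoff is zero, IC reduces to showing that deviating to any other contract $(\pi'{}^c,t(\pi'))$ yields nonpositive surplus, i.e.
\[
V(\pi'{}^c\mid\pi)-t(\pi')\;\le\;0 \quad\text{for all }\pi,\pi'\in\Pi.
\]
Substituting $t(\pi')=V(\pi'{}^c\mid\pi')$, this is equivalent to
\[
V(\pi'{}^c\mid\pi')\;\ge\;V(\pi'{}^c\mid\pi),
\]
which is exactly the statement of Lemma~\ref{mainlemma} applied to the base signal $\pi'$ and the arbitrary alternative $\pi$. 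Since $|A|=2$, the hypothesis of the lemma is satisfied, and IC follows.

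The main potential obstacle is subtle: the lemma is stated for any $\pi'\in\Pi$, but the minimal complementary signal need not be uniquely defined. I would address this by fixing, once and for all, a selection $\pi\mapsto\pi^c$ of minimal complementary signals and noting that the argument above uses only the two properties that (a) $\pi\lor\pi^c=\overline{\pi}$ (used for surplus extraction and pricing), and (b) $V(\pi^c\mid\pi)\ge V(\pi^c\mid\pi')$ for every $\pi'$ (used for IC), both of which hold for any such selection by Lemma~\ref{mainlemma}. Combining the three steps yields a feasible menu whose revenue equals the efficient surplus $\mathbb{E}_{\pi\sim F}[\overline{U}-U(\pi)]$, which is an upper bound on the broker's revenue since no mechanism can leave the decision maker with negative surplus; hence $\mathcal{M}$ is optimal.
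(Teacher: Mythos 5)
Your proposal is correct and follows essentially the same route as the paper: price each minimal complementary signal at the type's full incremental value so that on-path surplus is zero, observe that revenue then equals the efficient-surplus upper bound, and reduce incentive compatibility to the inequality $V(\pi'^c\mid\pi')\ge V(\pi'^c\mid\pi)$, which is exactly Lemma~\ref{mainlemma}. Your additional remarks on fixing a selection of minimal complementary signals and on the revenue upper bound make explicit what the paper leaves implicit, but do not change the argument.
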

The proof is in Appendix \ref{appendix:mainproposition}.

\subsection{Many states and many actions}
The result of Proposition~\ref{mainproposition} does not extend beyond the binary action case. In this section, I analyze an environment with many states and actions and provide sufficient conditions on the payoff structure under which the data broker can still extract the efficient surplus. Assume $|A|=|\Omega|=n\ge 3$ and consider a class of payoffs where the decision maker’s objective is to match the action to the realized state. This class has two features: (i) any action--state mismatch yields the same baseline payoff; (ii) the marginal value of correction depends only on the state, not on the action the decision maker would otherwise take. Unlike general environments (where the marginal value depends on the chosen action), I restrict attention to payoff functions $u(\omega,a)$ that admit the following diagonal representation:

\[
\begin{array}{c|cccc}
u(a_i, \omega_j) & \omega_1 & \omega_2 & \cdots & \omega_n \\ \hline
a_1 & u_1 & 0 & \cdots & 0 \\
a_2 & 0 & u_2 & \cdots & 0 \\
\vdots & \vdots & \vdots & \ddots & \vdots \\
a_n & 0 & 0 & \cdots & u_n
\end{array}
\]
In other words,
\begin{equation}\label{diagonalmatrix}
u(\omega_i,a_j) =
\begin{cases}
  u_{i} & \text{if } j = i, \\
  0,  & \text{if } j \neq i.
\end{cases}
\end{equation}
Let $u(\omega)$ denote the payoff obtained from taking the optimal action in state $\omega$ i.e., 
\begin{equation*}
    u(\omega) =  \max_{a\in A} u(a,\omega).
\end{equation*}
In settings with many actions and states, for any two signals $\pi$ and $\pi'$, the $\pi'$-minimal complementary signal to $\pi$ is generally challenging to identify explicitly and, crucially, it depends on both $\pi$ and $\pi'$. As a result, constructing the optimal menu using minimal complementary signals, as in the binary-action case, does not extend to settings with many actions and states. In the next section, I construct an alternative signal that induces full information and depends only on the signal it augments.

\subsubsection{Error-correcting refinement}
For any $\pi \in \Pi$, construct $\hat{\pi}$ such that it reveals the true state whenever $\pi$ induces a suboptimal action and otherwise sends a common message confirming that the action is correct. Note that $\hat{\pi}$ depends only on $\pi$. I refer to $\hat{\pi}$ as an \textit{error-correcting refinement} of $\pi$. Formally, $\hat{\pi}$ is constructed as follows. First, define $s_i$'s such that $s_1$ is the union of all messages in $\pi$ that induce action $a_1$, and for any $i \in \{2,...,n\}$, $s_i$ is the union of all messages in $\pi$ that induce action $a_{i}$, and do not induce action $a_{k}$ for any $k < i$. \footnote{Note that a message may induce more than one action.} Next, construct $\hat{\pi} = \{\hat{s}_1,\hat{s}_2,...,\hat{s}_n,\hat{s}\}$ from $\{s_i\}_i$ such that 
\begin{equation*}
    \hat{s}(\omega_i) = s_i (\omega_i) \qquad \text{ for all } i \in \{1,2,...,n\},
\end{equation*}
and 
\[
\hat{s}_i(\omega) =
\begin{cases}
\bigcup_{j\neq i} s_j(\omega_i) & \text{if } \omega = \omega_i,\\[6pt]
\phi, & \omega \neq \omega_i.
\end{cases}
\]
Note that $\hat{s}_i \cap s_j$ reveals the true state for any $i,j$. Moreover, $\hat{s} \cap s_j (\omega_i)= \phi$ whenever $j \neq i$, implying that $\hat{s} \cap s_j$ also always reveals the true state for any $j$. That is, $\pi \lor \hat{\pi}$ is the fully revealing signal. \\

Is $\hat{\pi}$ the least informative (in the Blackwell sense) signal that, when combined with $\pi$, induces full information? Note that Blackwell informativeness is only a partial order, so a least element need not exist, and many signals are incomparable unless one refines the other. Rather than searching for a ``minimal'' way to induce full information for a given $\pi$, consider the weaker test: does $\hat{\pi}$ Blackwell-dominate every signal that, when combined with $\pi$, fully reveals the state? The following two examples show that the answer depends on the signal. First, consider the following signal 
\begin{center}
\scalebox{.9}{%
\begin{tabular}{c c c}
    $\omega_1$ & $\omega_2$ & $\omega_3$ \\
    \begin{tikzpicture}
        \node[left] at (0,0) {$\pi:$\quad};
        \draw[thick] (0,0) -- (4,0);
        \draw[blue]
            (0,0) -- (3,0) node[midway, yshift=-0.5cm] {$s_1$};
        \draw[brown]
            (3,0) -- (3.5,0) node[midway, yshift=-0.5cm] {$s_2$};
        \draw[red]
            (3.5,0) -- (4,0) node[midway, yshift=-0.5cm] {$s_3$};
        \filldraw[black] (0,0) circle (2pt);
        \filldraw[black] (3,0) circle (2pt);
        \filldraw[black] (3.5,0) circle (2pt);
        \filldraw[black] (4,0) circle (2pt);
    \end{tikzpicture} &
    \begin{tikzpicture}
        \draw[thick] (0,0) -- (4,0);
        \draw[brown]
            (0,0) -- (3,0) node[midway, yshift=-0.5cm] {$s_2$};
        \draw[blue]
            (3,0) -- (3.5,0) node[midway, yshift=-0.5cm] {$s_1$};
        \draw[red]
            (3.5,0) -- (4,0) node[midway, yshift=-0.5cm] {$s_3$};
        \filldraw[black] (0,0) circle (2pt);
        \filldraw[black] (3,0) circle (2pt);
        \filldraw[black] (3.5,0) circle (2pt);
        \filldraw[black] (4,0) circle (2pt);
    \end{tikzpicture}  & 
        \begin{tikzpicture}
        \draw[thick] (0,0) -- (4,0);
        \draw[red]
            (0,0) -- (3,0) node[midway, yshift=-0.5cm] {$s_3$};
        \draw[brown]
            (3,0) -- (3.5,0) node[midway, yshift=-0.5cm] {$s_2$};
        \draw[blue]
            (3.5,0) -- (4,0) node[midway, yshift=-0.5cm] {$s_1$};
        \filldraw[black] (0,0) circle (2pt);
        \filldraw[black] (3,0) circle (2pt);
        \filldraw[black] (3.5,0) circle (2pt);
        \filldraw[black] (4,0) circle (2pt);
    \end{tikzpicture}.\\
\end{tabular}
}
\end{center}
Its error-correcting refinement is given by 
\begin{center}
\scalebox{.9}{%
\begin{tabular}{c c c}
    $\omega_1$ & $\omega_2$ & $\omega_3$ \\
    \begin{tikzpicture}
        \node[left] at (0,0) {$\hat{\pi}:$\quad};
        \draw[thick] (0,0) -- (4,0);
        \draw[black]
            (0,0) -- (3,0) node[midway, yshift=-0.5cm] {$\hat{s}$};
        \draw[blue]
            (3,0) -- (3.5,0) node[midway, yshift=-0.5cm] {$\hat{s}_1$};
        \draw[blue]
            (3.5,0) -- (4,0) node[midway, yshift=-0.5cm] {$\hat{s}_1$};
        \filldraw[black] (0,0) circle (2pt);
        \filldraw[black] (3,0) circle (2pt);
        \filldraw[black] (3.5,0) circle (2pt);
        \filldraw[black] (4,0) circle (2pt);
    \end{tikzpicture} &
    \begin{tikzpicture}
        \draw[thick] (0,0) -- (4,0);
        \draw[black]
            (0,0) -- (3,0) node[midway, yshift=-0.5cm] {$\hat{s}$};
        \draw[brown]
            (3,0) -- (3.5,0) node[midway, yshift=-0.5cm] {$\hat{s}_2$};
        \draw[brown]
            (3.5,0) -- (4,0) node[midway, yshift=-0.5cm] {$\hat{s}_2$};
        \filldraw[black] (0,0) circle (2pt);
        \filldraw[black] (3,0) circle (2pt);
        \filldraw[black] (3.5,0) circle (2pt);
        \filldraw[black] (4,0) circle (2pt);
    \end{tikzpicture}  & 
        \begin{tikzpicture}
        \draw[thick] (0,0) -- (4,0);
        \draw[black]
            (0,0) -- (3,0) node[midway, yshift=-0.5cm] {$\hat{s}$};
        \draw[red]
            (3,0) -- (3.5,0) node[midway, yshift=-0.5cm] {$\hat{s}_3$};
        \draw[red]
            (3.5,0) -- (4,0) node[midway, yshift=-0.5cm] {$\hat{s}_3$};
        \filldraw[black] (0,0) circle (2pt);
        \filldraw[black] (3,0) circle (2pt);
        \filldraw[black] (3.5,0) circle (2pt);
        \filldraw[black] (4,0) circle (2pt);
    \end{tikzpicture}.  \\
\end{tabular}
}
\end{center}
The only way to obtain a signal that is less informative than $\hat{\pi}$ is by merging at least two messages of $\hat{\pi}$. Note that any signal obtained from $\hat{\pi}$ by merging messages does not fully reveal the information when combined with $\pi$. Therefore, $\hat{\pi}$ does not Blackwell dominate any signal that reveals full information when combined with $\pi$.\\

Second, consider the following signal 
\begin{center}
\scalebox{.9}{%
\begin{tabular}{c c c}
    $\omega_1$ & $\omega_2$ & $\omega_3$ \\
    \begin{tikzpicture}
        \node[left] at (0,0) {$\pi:$\quad};
        \draw[thick] (0,0) -- (4,0);
        \draw[blue]
            (0,0) -- (3,0) node[midway, yshift=-0.5cm] {$s_1$};
        \draw[brown]
            (3,0) -- (4,0) node[midway, yshift=-0.5cm] {$s_2$};
        \filldraw[black] (0,0) circle (2pt);
        \filldraw[black] (3,0) circle (2pt);
        \filldraw[black] (4,0) circle (2pt);
    \end{tikzpicture} &
    \begin{tikzpicture}
        \draw[thick] (0,0) -- (4,0);
        \draw[brown]
            (0,0) -- (3,0) node[midway, yshift=-0.5cm] {$s_2$};
        \draw[red]
            (3,0) -- (4,0) node[midway, yshift=-0.5cm] {$s_3$};
        \filldraw[black] (0,0) circle (2pt);
        \filldraw[black] (3,0) circle (2pt);
        \filldraw[black] (4,0) circle (2pt);
    \end{tikzpicture}  & 
        \begin{tikzpicture}
        \draw[thick] (0,0) -- (4,0);
        \draw[red]
            (0,0) -- (3,0) node[midway, yshift=-0.5cm] {$s_3$};
        \draw[blue]
            (3,0) -- (4,0) node[midway, yshift=-0.5cm] {$s_1$};
        \filldraw[black] (0,0) circle (2pt);
        \filldraw[black] (3,0) circle (2pt);
        \filldraw[black] (4,0) circle (2pt);
    \end{tikzpicture}.\\
\end{tabular}
}
\end{center}
Its error-correcting refinement is given by 
\begin{center}
\scalebox{.9}{%
\begin{tabular}{c c c}
    $\omega_1$ & $\omega_2$ & $\omega_3$ \\
    \begin{tikzpicture}
        \node[left] at (0,0) {$\hat{\pi}:$\quad};
        \draw[thick] (0,0) -- (4,0);
        \draw[black]
            (0,0) -- (3,0) node[midway, yshift=-0.5cm] {$\hat{s}$};
        \draw[blue]
            (3,0) -- (4,0) node[midway, yshift=-0.5cm] {$\hat{s}_1$};
        \filldraw[black] (0,0) circle (2pt);
        \filldraw[black] (3,0) circle (2pt);
        \filldraw[black] (4,0) circle (2pt);
    \end{tikzpicture} &
    \begin{tikzpicture}
        \draw[thick] (0,0) -- (4,0);
        \draw[black]
            (0,0) -- (3,0) node[midway, yshift=-0.5cm] {$\hat{s}$};
        \draw[brown]
            (3,0) -- (4,0) node[midway, yshift=-0.5cm] {$\hat{s}_2$};
        \filldraw[black] (0,0) circle (2pt);
        \filldraw[black] (3,0) circle (2pt);
        \filldraw[black] (4,0) circle (2pt);
    \end{tikzpicture}  & 
        \begin{tikzpicture}
        \draw[thick] (0,0) -- (4,0);
        \draw[black]
            (0,0) -- (3,0) node[midway, yshift=-0.5cm] {$\hat{s}$};
        \draw[red]
            (3,0) -- (4,0) node[midway, yshift=-0.5cm] {$\hat{s}_3$};
        \filldraw[black] (0,0) circle (2pt);
        \filldraw[black] (3,0) circle (2pt);
        \filldraw[black] (4,0) circle (2pt);
    \end{tikzpicture}.\\
\end{tabular}
}
\end{center}
Consider the signal $\tilde{\pi}$ obtained from $\hat{\pi}$ by merging  $\hat{s}_1,\hat{s}_2$, and $\hat{s}_3$ as one message $\tilde{s}$. That is  $\tilde{\pi} = \{\tilde{s},\hat{s}\}$.
\begin{center}
\scalebox{.9}{%
\begin{tabular}{c c c}
    $\omega_1$ & $\omega_2$ & $\omega_3$ \\
    \begin{tikzpicture}
        \node[left] at (0,0) {$\tilde{\pi}:$\quad};
        \draw[thick] (0,0) -- (4,0);
        \draw[black]
            (0,0) -- (3,0) node[midway, yshift=-0.5cm] {$\hat{s}$};
        \draw[cyan]
            (3,0) -- (4,0) node[midway, yshift=-0.5cm] {$\tilde{s}$};
        \filldraw[black] (0,0) circle (2pt);
        \filldraw[black] (3,0) circle (2pt);
        \filldraw[black] (4,0) circle (2pt);
    \end{tikzpicture} &
    \begin{tikzpicture}
        \draw[thick] (0,0) -- (4,0);
        \draw[black]
            (0,0) -- (3,0) node[midway, yshift=-0.5cm] {$\hat{s}$};
        \draw[cyan]
            (3,0) -- (4,0) node[midway, yshift=-0.5cm] {$\tilde{s}$};
        \filldraw[black] (0,0) circle (2pt);
        \filldraw[black] (3,0) circle (2pt);
        \filldraw[black] (4,0) circle (2pt);
    \end{tikzpicture}  & 
        \begin{tikzpicture}
        \draw[thick] (0,0) -- (4,0);
        \draw[black]
            (0,0) -- (3,0) node[midway, yshift=-0.5cm] {$\hat{s}$};
        \draw[cyan]
            (3,0) -- (4,0) node[midway, yshift=-0.5cm] {$\tilde{s}$};
        \filldraw[black] (0,0) circle (2pt);
        \filldraw[black] (3,0) circle (2pt);
        \filldraw[black] (4,0) circle (2pt);
    \end{tikzpicture}.\\
\end{tabular}
}
\end{center}
In this case, $\tilde{\pi}$  reveals full information when combined with $\pi$ and is Blackwell dominated by $\hat{\pi}$. Nevertheless, starting from $\hat{\pi}$ one can construct another signal that does not Blackwell dominate any signal that, when combined with $\pi$, fully reveals the state. Since signals are finite partitions, such a construction can be obtained by pairwise merging messages of $\hat{\pi}$. The following lemma states an important property of the error-correcting refinement. 
\begin{lemma}\label{generallemma}
Let $\Omega = \{\omega_1,\omega_2,\ldots,\omega_n\}$, $A = \{a_1,a_2,\ldots,a_n\}$, and  $u(a,\omega)$ be as defined in (\ref{diagonalmatrix}).
 Moreover, let $\pi \in \Pi$ and $\hat{\pi}$ be the error-correcting refinement of $\pi$.  Then, 
\begin{equation*}
    V(\hat{\pi} \mid \pi) \, \geq \,  V(\hat{\pi} \mid \pi') \quad \forall \pi' \in \Pi. 
\end{equation*}
\end{lemma}
The formal proof is in Appendix~\ref{appendix:generallemma}. 
Below, I will provide the intuition through an example where the decision maker has two types.
\subsubsection{Intuition of Lemma \ref{generallemma}}
Here I provide the intuition for a case where one type is fully uninformed and the other has some information; the same intuition extends to any two types. Consider the following two types.
\begin{itemize}
    \item[1.] \emph{Uninformed (U)}: takes the default action $a_j \in A$. That is,
    \begin{equation*}
        \mu(\omega_j) u(\omega_j) \geq \mu(\omega)\, u(\omega) \quad \forall \omega \in \Omega.
    \end{equation*}
    \item[2.] \emph{Partly informed (I)}: owns a signal $\pi \in \Pi$. 
\end{itemize}
Let $\hat{\pi}$ be the {\it error-correcting refinement} of $\pi$. In state $\omega_i$, the decision maker benefits whenever she is corrected from choosing the incorrect action. This occurs exactly when she receives signal $\hat{s}_i$, which happens with probability $\mu(\hat{s}_i \mid \omega_i)$. The marginal benefit from being corrected in state $\omega_i$ is $u_i$. Hence, the informed type’s valuation for the fully revealing signal is
\begin{equation*}
    V_I = \sum_{i=1}^n \mu(\omega_i) \, \mu(\hat{s}_i \mid \omega_i) \, u_i. 
\end{equation*}
Now let us analyze the uninformed type's benefit from mimicking the informed type. 
Each signal $\hat{s}_i$ fully reveals state $\omega_i$. Consider two cases.

\medskip
\noindent\textit{Case 1.} $\hat{s}$ induces action $a_j \in A$.  
In this case, the uninformed type's benefit cannot exceed $V_I$. If the state is $\omega_j$, she will not change her action and thus receives no benefit. In any other state $\omega \neq \omega_j$, she will change her action whenever the informed type does.

\medskip
\noindent\textit{Case 2.} $\hat{s}$ induces action $a_k \neq a_j$.  In state $\omega_k$, the uninformed type always takes the correct action, whereas in any any other state $\omega_\ell$, she takes the correct action only when she receives message $\hat{s}_\ell$. 
The uninformed type's benefit is given by
\[
V_U 
= \sum_{\ell \neq k} \mu(\omega_\ell)\, \mu(\hat{s}_\ell \mid \omega_\ell)\, u_\ell
    + \mu(\omega_k)\, u_k 
    - \mu(\omega_j)\, u_j \, \leq  \, \sum_{\ell \neq k} \mu(\omega_\ell)\, \mu(\hat{s}_\ell \mid \omega_\ell)\, u_\ell \, \leq \, V_I.
\]
The last inequality follows from the fact that 
$\mu(\omega_j)\,u_j \geq \mu(\omega)\,u(\omega)$ for all $\omega \in \Omega$. \\
\\
The following proposition shows that, with diagonal payoffs, the data broker extracts the efficient surplus. 

\begin{proposition}\label{generalproposition}
Assume $\Omega = \{\omega_1, \omega_2, \dots, \omega_n\}$ and $A = \{a_1, a_2, \dots, a_n\}$, $\mathcal{T} = \Pi$ and $F \sim \mathcal{T}$. The data broker extracts the efficient surplus from the decision maker. In particular, the optimal menu is given by  
\begin{equation*}
    \mathcal{M} = \{(\hat{\pi}, t(\pi)) \mid \pi \in \Pi\},
\end{equation*}
where 
\begin{equation*}
    t(\pi) = \overline{U} - U(\pi).  
\end{equation*}
\end{proposition}
The proof is in appendix~\ref{appendix:generalproposition}. Under diagonal payoffs (state–action matching), any mismatch between the chosen action and the realized state yields the same baseline payoff. As a result, the marginal value of being corrected depends only on the state and is independent of the action the decision maker would otherwise take.
By contrast, under general payoff structures, the marginal value of being corrected typically depends on the default action the decision maker is taking. The example below shows that, in such environments, efficient surplus extraction cannot be guaranteed.

\subsubsection{Example}
The following example illustrates the case in which the data broker is not able to extract the efficient surplus from the decision maker. Consider the following payoff function represented by upper triangular matrix:
\[
\begin{array}{c|ccc}
   & \omega_1 & \omega_2 & \omega_3 \\ \hline
a_1 & u_{1} & u_1 & u_1\\
a_2 & 0 & u_{2} & u_2\\
a_3 & 0 & 0 & u_3\\
\end{array}
\]
Assume that $u_1 < u_2 < u_3$ and that the prior $\mu= (\mu_1,\mu_2,\mu_3)$ is such that the decision maker is indifferent between taking any action $a \in A$. That is 
    \begin{equation*}
        u_1 = (\mu_2  + \mu_3 )\, u_2 = \mu_3 \, u_3.
    \end{equation*}
Moreover, assume $\mu_i > 0$ for $i=1,2,3$. 
Consider two types for the decision maker i) fully uninformed, and ii) partially informed that owns a signal that only fully reveals when the state is $\omega_3$. 

\begin{center}
\scalebox{.9}{%
\begin{tabular}{c c c}
    $\omega_1$ & $\omega_2$ & $\omega_3$ \\
    \begin{tikzpicture}
        \node[left] at (0,0) {$\pi:$\quad};
        \draw[thick] (0,0) -- (4,0);
        \draw[purple, thick] (0,0) -- (4,0);
        \draw[blue]
            (0,0) -- (4,0) node[midway, yshift=-0.5cm] {$s_1$};
        \filldraw[black] (0,0) circle (2pt) node[below] {$0$};
        \filldraw[black] (4,0) circle (2pt) node[below] {$1$};
    \end{tikzpicture} &
    \begin{tikzpicture}
        \draw[thick] (0,0) -- (4,0);
        \draw[purple, thick] (0,0) -- (4,0);
        \draw[blue]
            (0,0) -- (4,0) node[midway, yshift=-0.5cm] {$s_1$};
        \filldraw[black] (0,0) circle (2pt) node[below] {$0$};
        \filldraw[black] (4,0) circle (2pt) node[below] {$1$};
    \end{tikzpicture} & 
        \begin{tikzpicture}
        \draw[thick] (0,0) -- (4,0);
        \draw[purple, thick] (0,0) -- (4,0);
        \draw[red]
            (0,0) -- (4,0) node[midway, yshift=-0.5cm] {$s_2$};
        \filldraw[black] (0,0) circle (2pt) node[below] {$0$};
        \filldraw[black] (4,0) circle (2pt) node[below] {$1$};
    \end{tikzpicture} \\
\end{tabular}
}
\end{center}
A $\pi_{\phi}-$minimal complementary signal of $\pi$ is given by 
\begin{center}
\scalebox{.9}{%
\begin{tabular}{c c c}
    $\omega_1$ & $\omega_2$ & $\omega_3$ \\
    \begin{tikzpicture}
        \node[left] at (0,0) {$\pi^c:$\quad};
        \draw[thick] (0,0) -- (4,0);
        \draw[purple, thick] (0,0) -- (4,0);
        \draw[brown]
            (0,0) -- (4,0) node[midway, yshift=-0.5cm] {$x$};
        \filldraw[black] (0,0) circle (2pt) node[below] {$0$};
        \filldraw[black] (4,0) circle (2pt) node[below] {$1$};
    \end{tikzpicture} &
    \begin{tikzpicture}
        \draw[thick] (0,0) -- (4,0);
        \draw[purple, thick] (0,0) -- (4,0);
        \draw[green]
            (0,0) -- (4,0) node[midway, yshift=-0.5cm] {$y$};
        \filldraw[black] (0,0) circle (2pt) node[below] {$0$};
        \filldraw[black] (4,0) circle (2pt) node[below] {$1$};
    \end{tikzpicture} & 
    \begin{tikzpicture}
        \draw[thick] (0,0) -- (4,0);
        \draw[blue, thick] (0,0) -- (1,0);
        \draw[red, thick] (1,0) -- (4,0);
        \draw[brown]
            (0,0) -- (1,0) node[midway, yshift=-0.5cm] {$x$};
        \draw[green]
            (1,0) -- (4,0) node[midway, yshift=-0.5cm] {$y$};
        \filldraw[black] (1,0) circle (2pt);
        \filldraw[black] (0,0) circle (2pt) node[below] {$0$};
        \filldraw[black] (4,0) circle (2pt) node[below] {$1$};
        \filldraw[black] (1,0) circle (2pt) node[below] {$a$};
    \end{tikzpicture} \\
\end{tabular}
}
\end{center}
Since $\pi^c$ must refine $s_1$, it must separate $\omega_1$ from $\omega_2$. Because $\omega_3$ is fully revealed under $\pi$, we can, without loss of generality, let $\pi^c$ split $\omega_3$ by sending message $x$ on $[0,a]$ and $y$ on $[a,1]$. We show that both $x$ and $y$ are used in state $\omega_3$, i.e., $a\in(0,1)$. \\
\\
The cutoff $a$ solves
\begin{equation*}
    \min_{a\in[0,1]}\; \mu(x)\,u(x)\;+\;\mu(y)\,u(y). 
\end{equation*}
That is, 
\begin{equation*}
    \min_{a\in[0,1]}\Big[
\max\{(\mu_1+a\mu_3)u_1,\; a\mu_3 u_3\}
+
\max\{(\mu_2+(1-a)\mu_3)u_2,\; (1-a)\mu_3 u_3\}
\Big].
\end{equation*}
Since $(\mu_2+\mu_3)u_2=\mu_3u_3$, we have $(\mu_2+(1-a)\mu_3)u_2>(1-a)\mu_3u_3$ for any $a>0$. It is not optimal to have $(\mu_1+a\mu_3)u_1<a\mu_3u_3$. Otherwise, decreasing $a$ by some $\varepsilon>0$ decreases the first term by $\varepsilon\mu_3u_3$ and increases the second term by $\varepsilon\mu_3u_2$. Hence, $(\mu_1+a\mu_3)u_1\ge a\mu_3u_3$. 

Optimality requires $(\mu_1+a\mu_3)u_1=a\mu_3u_3$. To see this, assume $(\mu_1+a\mu_3)u_1>a\mu_3u_3$. Then $a<1$. Increasing $a$ by some $\varepsilon>0$ increases the first term by $\varepsilon\mu_3u_1$ and decreases the second term by $\varepsilon\mu_3u_2$. Thus, one optimal solution sets $a$ so that $(\mu_1+a\mu_3)u_1=a\mu_3u_3$, i.e.,
\[
a=\frac{\mu_1u_1}{\mu_3(u_3-u_1)}\in(0,1),
\]
since $u_1(\mu_1+\mu_3)<\mu_3u_3$.\\

The DM is indifferent between taking actions $a_1$ or $a_3$ after receiving signal realization $x$. Low type's benefit from $\pi^c$ is.
\begin{equation*}
    V(\pi^c \mid \pi)  =  \mu(\omega_2) \, (u_2-u_1).
\end{equation*}
High type's benefit from $\pi^c$ is
\begin{equation*}
\begin{split}
    V(\pi^c \mid \pi_\phi)  = \mu(\omega_2) \, (u_2-u_1) + \mu(\omega_3,y) \, (u_2 -u_1) >    V(\hat{\pi} \mid \pi).
\end{split}
\end{equation*}
since $\mu(\omega_3,y) > 0$. Proposition~\ref{fullsurplusextraction} implies that the data broker cannot extract the efficient surplus extraction in this environment.\\

Observe that the partially informed type already knows state $\omega_3$ and therefore ignores the messages that $\pi^c$ sends in that state. Nevertheless, although she ignores message $y$ in $\omega_3$, the uninformed type still benefits from its availability. Moreover, when following $\pi^c$, the uninformed decision maker never selects the optimal action in state $\omega_3$. Let $a^*(\omega_3)\in\arg\max_{a\in A} u(\omega_3,a)$ denote the optimal action, let $a_0\in A\setminus\{a^*(\omega_3)\}$ be the action she would take under $\pi_{\phi}$ alone in $\omega_3$, and let $a_1\in A\setminus\{a^*(\omega_3)\}$ be the action she takes when obeying $\pi^c$ in $\omega_3$. By the payoff assumptions, $u(\omega_3,a_1) > u(\omega_3,a_0)$ and $a_1 \neq a^*(\omega_3)$. Thus, in $\omega_3$ she attains a strictly higher payoff despite moving between two suboptimal actions.

\section{Conclusion}
This paper studies how a data broker sells information, in the form of a signal, to a decision maker who privately owns another signal. Unlike prior work on selling supplemental information, the decision maker’s private type is itself a signal structure—that is, a distribution over her interim beliefs. The analysis focuses on environments in which the decision maker purchases information before her private signal is realized, implying that the data broker must screen over all feasible signal structures. I allow the signals the data broker can sell to contain messages that are arbitrarily correlated with the ones the decision maker privately owns. Considering such a rich signal space plays a central role in the design of the optimal menu. Moreover, modeling information as a signal aligns with practical settings in which a supplemental dataset is appended to an existing one.\\

The data broker extract the efficient surplus in a large class of environments.  Comparing this setting with the case in which the data broker can sell only signals that are not correlated highlights an important distinction. The decision maker can end up fully informed even after purchasing partial information, if the signals the data broker can provide are correlated with the signal the decision maker privately owns. Moreover, the signals sold take the form of action recommendations, so the data broker can condition information on the action the decision maker intends to take.
Regarding efficiency, the ability of the data broker to sell correlated signals increases overall efficiency but reduces the decision maker’s information rents.  \\


\newpage
\bibliographystyle{chicago}
\bibliography{bib}

\clearpage

\section{Appendix} \label{sec:appendixa}
\addcontentsline{toc}{section}{appendix A}

\subsection{Proof of Lemma~\ref{reduceproblem}}\label{appendix:reduceproblem}

\begin{proof}
Assume $\{(\sigma_H,t_H), (\sigma_L,t_L)\}$ is an optimal allocation. 
\begin{itemize}
    \item [a.] Assume by contradiction that $U(\sigma_H \lor \pi_H) < \overline{U}$. So, there exists $\sigma_H' \in \Pi$ such that $V_H(\sigma_H') > V_H(\sigma_H)$. 
    
    If $IR_H$ binds, consider changing the high type's allocation to $(\sigma_H', t_H')$, where  $t_H' = V_H(\sigma_H')$. High type's utility remains unchanged. So, $IR_H$ and $IC_H$ hold. Moreover, $IR_L$ remains unchanged. The high type has a higher willingness to pay for the full information than the low type, hence we have 
    \begin{equation*}
        V_L(\sigma_H') - t_H' \leq V_H(\sigma_H')  - t_H' = 0,
    \end{equation*}
    implying that $IC_L$ is not violated.

    If $IR_H$ does not bind, then $IC_H$ must bind. Otherwise, the data broker can attain a higher profit by raising $t_H$ by a small amount. Consider now changing the high type's allocation to $(\sigma_H', t_H')$ so that $\sigma_H' \lor \pi_H = \overline{\pi}$ and
    \begin{equation*}
        t_H' = V_H(\sigma_H') - \big( V_H(\sigma_L)-t_L \big). 
    \end{equation*}
    $IR_L$ does not change. Clearly, $IC_H$ is not violated since $ V_H(\sigma_H') - t_H' = V_H(\sigma_L) - t_L.$ Moreover, $IR_H$ is not violated because 
    \begin{equation*}
        V_H(\sigma_H') - t_H' = V_H(\sigma_L) - t_L =  V_H(\sigma_H) - t_H \geq 0. 
    \end{equation*}
    The second equality comes from the fact that given $\{(\sigma_H,t_H), (\sigma_L,t_L)\}$,  $IC_H$ binds. If $IC_L$ holds, then $\{(\sigma_H',t_H'), (\sigma_L,t_L)\}$ is feasible and yields higher profits. \footnote{$t_H' = V_H(\sigma_H') - (V_H(\sigma_L) - t_L) > V_H(\sigma_H) - (V_H(\sigma_L) - t_L) = V_H(\sigma_H)  - (V_H(\sigma_H)-t_H)$. The last equality comes from the fact that $IC_H$ binds. That is, $t_H' > t_H. $} If $IC_L$ does not hold, then we have 
    \begin{equation*}
          V_L(\sigma_L) - t_L <  V_L(\sigma_H') - t_H'. 
    \end{equation*}
    Note that $t_H' \geq  t_L$ since $V_H(\sigma_H') \geq V_H(\sigma_L)$. But then, selling the same allocation $(\sigma_H', t_H')$ to both types yields higher profits. 

\item [b.] By contradiction, assume that $IR_L$ does not bind. Then $IC_L$ and $IR_H$ bind. But then,
\begin{equation*}
\begin{aligned}
V_L(\sigma_L) - t_L
&= V_L(\sigma_H) - t_H \quad \text{since $IC_L$ binds,} \\
&= V_L(\sigma_H) - V_H(\sigma_H) \quad \text{since $IR_H$ binds,} \\
&= V_L(\sigma_H) - V_H(\overline{\pi}) \quad \text{from part (a),} \\
&\le 0.
\end{aligned}
\end{equation*}
Thus, $IR_L$ binds.

\item [c.] By contradiction, assume that $IC_H$ does not bind. Then $IR_H$ must bind (if neither binds, this cannot be optimal, since the data broker could raise the price $t_H$). Since $IR_H$ binds, we have $V_H(\sigma_H) - t_H = 0$. Hence, the $IC_H$ constraint writes
\begin{equation*} 
V_H(\sigma_L) - t_L < V_H(\sigma_H)-t_H = 0. 
\end{equation*}
From part (b), we have $t_L = V_L(\sigma_L)$. Hence, $V_H(\sigma_L) < V_L(\sigma_L)$. Since the data broker cannot extract the efficient surplus, we have $\sigma_L \lor \pi_L \neq \overline{\pi}$. Otherwise, since $IR_L$ binds, $IR_H$ would be slack. Thus, $V_L(\sigma_L) < V_L(\overline{\pi})$. But then, since $IC_H$ is slack, consider giving the low type a better signal $\sigma_L'$ that will improve his utility and increase $t_L$ without violating the $IC_H$ constraint. That is, we can choose $\sigma_L'$ so that $V_H(\sigma_L)$ will increase by a very small amount. \\

More formally, let us first show that for any $\varepsilon > 0$, there exists $\sigma_L' \in \Pi$ such that $V_H(\sigma_L') \leq V_H(\sigma_L) + \varepsilon$ and $V_L(\sigma_L') > V_L(\sigma_L)$. Since $U(\sigma_L \lor \pi_L) < \overline{U}$, there exist $s\in \sigma_L$ and $s' \in \pi_L$ such that $s \cap s'$ induces a suboptimal action in some state $\omega$. Then, obtain $\sigma_L'$ from $\sigma_L$ by refining $s$ in a way that reveals the true state $\omega$ only on a very small subset in $(s' \cap s)(\omega)$. This region can be chosen arbitrarily small so that $V_H(\sigma_L') \leq V_H(\sigma_L) + \varepsilon$. \\

Then, consider $\sigma_L'$ obtained from $\sigma_L$ as above, and let $t_L' = t_L + V_L(\sigma_L') - V_L(\sigma_L)$. Note that $IR_L$, $IR_H$, and $IC_L$ will not be violated. Moreover, we can choose $\varepsilon >0$ such that it will not violate $IC_H$.

\end{itemize}
\end{proof}

\subsection{Proof of Proposition~\ref{fullsurplusextraction}} \label{appendix:fullsurplusextraction}

\begin{proof}
Assume that
\[
V_L(\pi^c_L(\pi_H)) \geq V_H(\pi^c_L(\pi_H)).
\]
Then the data broker can extract the efficient surplus as follows. The data broker sells the fully revealing signal to the high type at a price equal to his willingness to pay. The low type does not mimic the high type. The data broker sells the signal $\pi^c_L(\pi_H)$ to the low type at a price equal to his willingness to pay. By the assumption above, the high type has no incentive to mimic the low type. Therefore, the resulting menu is incentive compatible and allows the data broker to extract the efficient surplus.\\

To prove the converse, assume by contradiction that $V_L(\pi^c_L(\pi_H)) < V_H(\pi^c_L(\pi_H))$ and the data broker extracts the efficient surplus. From (\ref{quasiminimalcomplementarysignal}), it follows that for any $\sigma_L$ with $U(\sigma_L \lor \pi_L) = \overline{U}$, we have $V_L(\sigma_L) = V_L(\pi^c_L(\pi_H))$ and $U(\pi_H \lor \pi^c_L(\pi_H) ) \leq U(\pi_H \lor \sigma_L)$. That is,
\begin{equation*}
    V_H(\pi^c_L(\pi_H)) \leq V_H(\sigma_L). 
\end{equation*}
From the supposition that $V_L(\pi^c_L(\pi_H)) < V_H(\pi^c_L(\pi_H))$, we have 
\begin{equation*}
   V_L(\sigma_L)= V_L(\pi^c_L(\pi_H)) < V_H(\pi^c_L(\pi_H)) \leq V_H(\sigma_L).
\end{equation*}
So, given any signal $\sigma_L$ with $U(\sigma_L \lor \pi_L) = \overline{U}$, we have $ V_L(\sigma_L) < V_H(\sigma_L)$. However, this contradicts efficient surplus extraction, because the high type can obtain strictly positive rents by mimicking the low type rather than choosing his designated signal.
\end{proof}

\subsection{Proof of Lemma~\ref{binary}} \label{appendix:binary}
\begin{proof}
Let $\pi = \{s_1, s_2, \dots, s_k\}$ for some $k \in \mathbb{N}$.
Define $\tilde{s}_1$ as the collection of all messages that induce action $a_1$, and $\tilde{s}_2$ as the collection of all messages that induce action $a_2$. Since $\pi$ does not fully reveal the true state, assume without loss of generality that $\mu(\tilde{s}_2 \mid \omega_1) > 0$. That is, the decision maker takes the wrong action in state $\omega_1$ with some positive probability.\\

First, the minimal complementary signal must contain at least two messages to identify the true state when the signal realization is  $\tilde{s}_2$. That is, there exist messages $\hat{s}_1$ and $\hat{s}_2$ such that 
\begin{equation*}
    \tilde{s}_2(\omega_1) \subseteq \hat{s}_2(\omega_1) \quad  \text{ and } \quad \tilde{s}_2(\omega_2) \subseteq \hat{s}_1(\omega_2) .
\end{equation*}
This implies that $\mu(\omega_1 \mid \tilde{s}_2 \cap \hat{s}_2) = 1$ and $\mu(\omega_2 \mid \tilde{s}_2 \cap \hat{s}_1) = 1$. In other words, $\hat{s}_1 \cap \tilde{s}_2$ and $\hat{s}_2 \cap \tilde{s}_2$ fully reveal the true state. Hence, the minimal complementary signal must have at least two messages. \\

Second, it is without loss to restrict attention to minimal complementary signals with at most two messages. If $\mu(\tilde{s}_1 \mid \omega_2) >0$, then to identify the true state conditional on $\tilde{s}_1$, two messages $s_1'$ and $s_2'$ with $\tilde{s}_1(\omega_1) \subseteq s_1'(\omega_1)$ and $\tilde{s}_1(\omega_2) \subseteq s_2'(\omega_2)$ are needed.   Note that it is weakly less informative to merge $s_1'$ with $\hat{s}_1$ as one signal, and  $s_2'$ with $\hat{s}_2$ as another signal.

\end{proof}

\subsection{Proof of Lemma~\ref{mainlemma}}\label{appendix:mainlemma}

\begin{proof}
Take $\pi,\pi' \in \Pi$. Let $s_1$ be the union of all messages in $\pi$ that induce action $a_1$, and $s_2$ be the union of all messages in $\pi$ that induce action $a_2$. Moreover, let  $\pi^c= \{\hat{s}_1, \hat{s}_2\}$ be the minimal complementary signal of $\pi$ where
\begin{equation*}
    \hat{s}_1(\omega_1) = s_1(\omega_1), \quad \hat{s}_1(\omega_2) = s_2(\omega_2) \quad \text{ and } \quad  \hat{s}_2(\omega_1) = s_2(\omega_1), \quad \hat{s}_2(\omega_2) = s_1(\omega_2). 
\end{equation*}
That is, receiving $\hat{s}_1$ can be interpreted as receiving message "don't switch", whereas receiving $\hat{s}_2$ can be interpreted as receiving message "switch". The benefit of type $\pi$ from $\pi^c$ is 
\begin{equation*}
\begin{split}
    V(\pi^c \mid \pi) & = U(\pi^c \lor \pi) - U(\pi)  \\
    & =\sum_{\omega \in \Omega_1} \mu(\omega)\, u_1(\omega) + \sum_{\omega \in \Omega_2} \mu(\omega)\, u_2(\omega)  - \sum_{\omega \in \Omega} \mu(\omega) \, [\mu(s_1 \mid \omega) \, u_1(\omega) + \mu(s_2 \mid \omega)\, u_2(\omega)]\\
    & = \sum_{\omega \in \Omega_1} \mu(\omega)\, \mu(s_2 \mid \omega) \, (u_1(\omega) - u_2(\omega)) + \sum_{\omega \in \Omega_2} \mu(\omega) \, \mu(s_1 \mid \omega) \, (u_2(\omega) - u_1(\omega)) \\
    & = \sum_{\omega \in \Omega_1}  \mu(s_2, \omega) \, (u_1(\omega) - u_2(\omega)) + \sum_{\omega \in \Omega_2} \mu(s_1, \omega) \, (u_2(\omega) - u_1(\omega)) \\
    & = \sum_{\omega \in \Omega_1} \mu(\hat{s}_2, \omega) \, (u_1(\omega) - u_2(\omega)) + \sum_{\omega \in \Omega_2} \mu(\hat{s}_2, \omega) \, (u_2(\omega) - u_1(\omega)). 
\end{split}
\end{equation*}
The last equality follows from the fact that $\hat{s}_2(\omega_1) = s_2(\omega_1)$ and $\hat{s}_2(\omega_2) =s_1(\omega_2)$. The benefit of type $\pi'$ from $\pi^c$ is
\begin{equation*}
\begin{split}
    V(\pi^c \mid \pi') & = U(\pi^c \lor \pi') - U(\pi')\\
    & = \sum_{s'\in \pi'} \sum_{\hat{s} \in \pi^c} \mu(s' \cap \hat{s}) \, u(s' \cap \hat{s}) - \sum_{s' \in \pi'} \mu(s') \, u(s')\\
    & = \sum_{s' \in \pi'} \mu(s') \bigg[ \sum_{\hat{s} \in \pi^c} \mu(\hat{s} \mid s') \, u(s' \cap \hat{s}) - u(s') \bigg] \\
    & = \sum_{s' \in \pi'} \mu(s') \, V(\pi^c \mid s'). 
\end{split}
\end{equation*}
Take any $s \in \pi'$.  If $\hat{s}_1 \cap s $ and $\hat{s}_2 \cap s$ induce the same action,  then $V(\pi^c \mid s)  = 0$.  Now, consider the following two cases:\\
\\
{\it Case 1.} Assume $\hat{s}_1 \cap s$ induces action $a_1$ and $\hat{s}_2 \cap s$ induces action $a_2$. We have 
\begin{equation*}
\begin{aligned}
V(\pi^c \mid s)
&= \mu(\hat{s}_1\mid s)\sum_{\omega\in\Omega}\mu(\omega\mid \hat{s}_1,s)\,u_1(\omega)
 + \mu(\hat{s}_2\mid s)\sum_{\omega\in\Omega}\mu(\omega\mid \hat{s}_2,s)\,u_2(\omega)
 - \max_{a_1,a_2}\sum_{\omega\in\Omega}\mu(\omega\mid s)\,u(a,\omega)\\
&\le \mu(\hat{s}_1\mid s)\sum_{\omega\in\Omega}\mu(\omega\mid \hat{s}_1,s)\,u_1(\omega)
 + \mu(\hat{s}_2\mid s)\sum_{\omega\in\Omega}\mu(\omega\mid \hat{s}_2,s)\,u_2(\omega)
 - \sum_{\omega\in\Omega}\mu(\omega\mid s)\,u_1(\omega)\\
&= \mu(\hat{s}_2\mid s)\sum_{\omega\in\Omega}\mu(\omega\mid \hat{s}_2,s)\big(u_2(\omega)-u_1(\omega)\big)\\
&= \mu(\hat{s}_2\mid s)\sum_{\omega\in\Omega_2}\mu(\omega\mid \hat{s}_2,s)\big(u_2(\omega)-u_1(\omega)\big)
 + \mu(\hat{s}_2\mid s)\sum_{\omega\in\Omega_1}\mu(\omega\mid \hat{s}_2,s)\big(u_2(\omega)-u_1(\omega)\big)\\
&\le \mu(\hat{s}_2\mid s)\sum_{\omega\in\Omega_2}\mu(\omega\mid \hat{s}_2,s)\big(u_2(\omega)-u_1(\omega)\big).
\end{aligned}
\end{equation*}
The last inequality follows because $u_1(\omega) > u_2(\omega)$ for all $\omega \in \Omega_1$.\\
\\
{\it Case 2.} Assume  $\hat{s}_1 \cap s$ induces action $a_2$ and $\hat{s}_2 \cap s$ induces action $a_1$. We have
\begin{equation*}
\begin{aligned}
V(\hat{\pi}\mid s)
&=  \mu(\pi^c \mid s)\sum_{\omega\in\Omega}\mu(\omega\mid \hat{s}_2,s)\,u_1(\omega) +\mu(\hat{s}_1\mid s)\sum_{\omega\in\Omega}\mu(\omega\mid \hat{s}_1,s)\,u_2(\omega)
 - \max_{a_1,a_2}\sum_{\omega\in\Omega}\mu(\omega\mid s)\,u(a,\omega)\\
&\le \mu(\hat{s}_2\mid s)\sum_{\omega\in\Omega}\mu(\omega\mid \hat{s}_2,s)\,u_1(\omega) +\mu(\hat{s}_1\mid s)\sum_{\omega\in\Omega}\mu(\omega\mid \hat{s}_1,s)\,u_2(\omega)
 - \sum_{\omega\in\Omega}\mu(\omega\mid s)\,u_2(\omega)\\
&= \mu(\hat{s}_2\mid s)\sum_{\omega\in\Omega}\mu(\omega\mid \hat{s}_2,s)\,u_1(\omega) - \mu(\hat{s}_2\mid s)\sum_{\omega\in\Omega}\mu(\omega\mid \hat{s}_2,s)\,u_2(\omega)\\
& =\mu(\hat{s}_2\mid s)\sum_{\omega\in\Omega}\mu(\omega\mid \hat{s}_2,s)\,(u_1(\omega) - u_2(\omega))\\
& =\mu(\hat{s}_2\mid s)\sum_{\omega\in\Omega_1}\mu(\omega\mid \hat{s}_2,s)\,(u_1(\omega) - u_2(\omega)) + \mu(\hat{s}_2\mid s)\sum_{\omega\in\Omega_2}\mu(\omega\mid \hat{s}_2,s)\,(u_1(\omega) - u_2(\omega))\\
& \leq \mu(\hat{s}_2\mid s)\sum_{\omega\in\Omega_1}\mu(\omega\mid \hat{s}_2,s)\,(u_1(\omega) - u_2(\omega)).
\end{aligned}
\end{equation*}
The last inequality follows because $u_1(\omega) < u_2(\omega)$ for all $\omega \in \Omega_2$. So, overall we have 
\begin{equation*}
\begin{split}
    V(\pi^c \mid s) & \leq \mu(\hat{s}_2\mid s)\sum_{\omega\in\Omega_2}\mu(\omega\mid \hat{s}_2,s)\big(u_2(\omega)-u_1(\omega)\big) + \mu(\hat{s}_2\mid s)\sum_{\omega\in\Omega_1}\mu(\omega\mid \hat{s}_2,s)\,(u_1(\omega) - u_2(\omega))\\
    & = \mu(\hat{s}_2\mid s) \bigg(\sum_{\omega\in\Omega_2}\mu(\omega\mid \hat{s}_2,s)\big(u_2(\omega)-u_1(\omega)\big) + \sum_{\omega \in \Omega_1}\mu(\omega\mid \hat{s}_2,s)\,(u_1(\omega) - u_2(\omega))\bigg).\\
\end{split}
\end{equation*}
Aggregating over all $s' \in \pi'$ yields the following inequality.
\begin{equation*}
\begin{split}
V(\pi^c \mid \pi') & = \sum_{s \in \pi'} \mu(s) V(\pi^c \mid s) \\
& \le \sum_{s \in \pi' } \mu(s) \mu(\hat{s}_2\mid s)\bigg[\sum_{\omega\in\Omega_2}\mu(\omega\mid \hat{s}_2,s)\big(u_2(\omega)-u_1(\omega)\big) + \sum_{\omega\in\Omega_1}\mu(\omega\mid \hat{s}_2,s)\,(u_1(\omega) - u_2(\omega))\bigg].
\end{split}
\end{equation*}
Note that since
\begin{equation*}
\mu(s) \mu(\hat{s}_2 \mid s) \mu(\omega \mid \hat{s}_2, s) = \mu(\omega, \hat{s}_2, s) .
\end{equation*}
we have 
\begin{equation*}
\begin{split}
V(\pi^c \mid \pi') 
& \le \sum_{s \in \pi' } \bigg[\sum_{\omega\in\Omega_2}\mu(\omega, \hat{s}_2, s)  \big(u_2(\omega)-u_1(\omega)\big) + \sum_{\omega\in\Omega_1}\mu(\omega, \hat{s}_2, s) \,(u_1(\omega) - u_2(\omega))\bigg]  \\
& = \sum_{\omega\in\Omega_2}  \sum_{s \in \pi' } \mu(\omega, \hat{s}_2, s)  \big(u_2(\omega)-u_1(\omega)\big) + \sum_{\omega\in\Omega_1}  \sum_{s \in \pi' }\mu(\omega, \hat{s}_2, s) \,(u_1(\omega) - u_2(\omega)) . \\
\end{split}
\end{equation*}
Since $\pi'$ partitions $\hat{s}_2(\omega)$, we have 
\begin{equation*}
    \sum_{s \in \pi'} \mu(\omega, \hat{s}_2, s) = \mu(\omega, \hat{s}_2). 
\end{equation*}
That is,
\begin{equation*}
\begin{split}
V(\pi^c \mid \pi') 
& \le   \sum_{\omega\in\Omega_2} \mu(\omega, \hat{s}_2)  \big(u_2(\omega)-u_1(\omega)\big) + \sum_{\omega\in\Omega_1}  \mu(\omega, \hat{s}_2) \,(u_1(\omega) - u_2(\omega))  \\
& = V(\pi^c \mid \pi).
\end{split}
\end{equation*}

\end{proof}

\subsection{Proof of Proposition~\ref{mainproposition}}\label{appendix:mainproposition}
\begin{proof}
Consider the following menu that consists of the collections of all minimal complementary signals 
\begin{equation*}
    \mathcal{M} = \{(\pi^c, t({\pi}))\}_{\pi \in \Pi}, \quad \text{ where }  \quad t(\pi) = U(\pi \lor \pi^c) - U(\pi).
\end{equation*}
Note that this is the maximum possible revenue the data broker can achieve, since each type ends up with full information and pays their maximum valuation that information. Lemma \ref{mainlemma} shows that this menu is incentive compatible, as no type benefits from purchasing another type’s minimal complementary signal.

\end{proof}

\subsection{Proof of Lemma ~\ref{generallemma}} \label{appendix:generallemma}
\begin{proof}
Let $\pi \in \Pi$. Assume $\{s_1,s_2,...,s_n\}$ are such that such that $s_1$ is the union of all messages in $\pi$ that induce action $a_1$, and for any $i \in \{2,...,n\}$, $s_i$ is the union of all messages in $\pi$ that induce action $a_{i}$, and do not induce action $a_{k}$ for any $k < i$. That is,  $u(s_i) = \mu(\omega_i \mid s_i) \, u_i$ for each $i \in \{1,2,...,n\}$. Let $\hat{\pi} = \{\hat{s}_1,...,\hat{s}_n, \hat{s}\}$ be the error-correcting refinement of $\pi$.  First, note that
First, note that
\begin{equation}
\begin{split}
    V(\hat{\pi} \mid \pi) & =  \sum_{i=1}^n \mu(\omega_i) \, u_i - \sum_{i=1}^n \mu(s_i) \, \mu(\omega_i \mid s_i) \, u_i \\
    & = \sum_{i=1}^n \mu(\omega_i) \bigg(1-\mu(s_i \mid \omega_i) \bigg) \, u_i \\
    & = \sum_{i=1}^n \mu(\omega_i) \, \bigg(\sum_{i \neq j} \mu(s_j \mid \omega_i) \bigg) u_i \\
    & = \sum_{i=1}^n \mu(\omega_i) \, \mu(\hat{s}_i \mid \omega_i) \, u_i. 
\end{split}
\end{equation}
Second, note that 
\begin{equation} \label{eqgeneral}
\begin{split}
  V(\hat{\pi} \mid \pi') & = \sum_{s' \in \pi'} \sum_{s'' \in \hat{\pi}} \mu(s' \cap s'') \, u(s' \cap s'') - \sum_{s' \in \pi'} \mu(s') \, u(s') \\
  & =  \sum_{s' \in \pi'} \mu(s') \bigg[\sum_{s'' \in \hat{\pi}} \mu(s'' \mid s') \, u(s' \cap s'') - u(s') \bigg] \\
  & =  \sum_{s' \in \pi'} \mu(s') V(\hat{\pi} \mid s'). 
\end{split}
\end{equation}
Fix any $s' \in \pi'$. Let $a_\ell$ be the optimal action upon receiving message $s'$. That is 
\begin{equation*}
    \mu(\omega_\ell \mid s') \, u_\ell \geq \mu(\omega_i \mid s') \, u_i, \quad \forall i. 
\end{equation*}
Consider two cases 
\begin{itemize}
    \item [i)] $\hat{s} \cap s'$ induces action $a_\ell$.  
    \begin{equation*}
    \begin{split}
        V(\hat{\pi} \mid s') & = \sum_{s'' \in \hat{\pi}} \mu(s'' \mid s') \, u(s' \cap s'') - u(s') \\
        & = \sum_{i=1}^n \mu(\hat{s}_i \mid s') u(s' \cap \hat{s}_i)   + \mu(\hat{s} \mid s') \, \mu(\omega_\ell \mid s'\cap \hat{s})\, u_\ell  - \mu(\omega_\ell \mid s')\, u_\ell \\
        & \leq \sum_{i=1}^n \mu(\hat{s}_i \mid s') u(s' \cap \hat{s}_i).  
    \end{split}
    \end{equation*}
    \item [ii)] $\hat{s} \cap s'$ induces action $a_k$, for some $k \neq \ell$.  
    \begin{equation*}
    \begin{split}
        V(\hat{\pi} \mid s') & = \sum_{s'' \in \hat{\pi}} \mu(s'' \mid s') \, u(s' \cap s'') - u(s') \\
        & = \sum_{i=1}^n \mu(\hat{s}_i \mid s') u(s' \cap \hat{s}_i) + \mu(\hat{s} \mid s') \, u(s' \cap \hat{s})  - u(s') \\
        & = \sum_{i=1}^n \mu(\hat{s}_i \mid s') u(s' \cap \hat{s}_i) + \mu(\hat{s} \mid s') \,\mu(\omega_k \mid \hat{s} \cap  s') \, u_k  -  \mu(\omega_\ell \mid s') \, u_\ell \\
        & \leq \sum_{i=1}^n \mu(\hat{s}_i \mid s') u(s' \cap \hat{s}_i) + \mu(\omega_k \mid s') \, u_k  -  \mu(\omega_\ell \mid s') \, u_\ell \\
        & \leq \sum_{i=1}^n \mu(\hat{s}_i \mid s') u(s' \cap \hat{s}_i). 
    \end{split}
    \end{equation*}
\end{itemize}
From both cases we have that 
\begin{equation*}
V(\hat{\pi} \mid s) \leq \sum_{i=1}^n \mu(\hat{s}_i \mid s') u(s' \cap \hat{s}_i).
\end{equation*}
Since $\mu(\omega_i \mid s' \cap \hat{s}_i) = 1$ and consequently $\mu(\hat{s}_i \mid s', \omega_j) = 0$ whenever $j \neq i$, it holds that
    \begin{equation*}
    \begin{split}
     V(\hat{\pi} \mid s')  \leq \sum_{i=1}^n \mu(\hat{s}_i \mid s') u(s' \cap \hat{s}_i) =  \sum_{i=1}^n \mu(\hat{s}_i \mid s') \, u_i  = \sum_{i=1}^n \mu(\omega_i \mid s') \, \mu(\hat{s}_i \mid \omega_i, s') \, u_i. 
    \end{split}
    \end{equation*}
Finally, from (\ref{eqgeneral}) it follows that 
\begin{equation*}
\begin{split}
V(\hat{\pi} \mid \pi') & = \sum_{s' \in \pi'} \mu(s') \, V(\hat{\pi} \mid s') \\
& \leq \sum_{s' \in \pi'} \mu(s')\, \sum_{i=1}^n \mu(\omega_i \mid s') \, \mu(\hat{s}_i \mid \omega_i, s') \, u_i \\
& = \sum_{i=1}^n \sum_{s' \in \pi'} \mu(s') \, \mu(\omega_i, \hat{s}_i \mid s') u_i \\
 & = \sum_{i=1}^n  \mu(\omega_i, \hat{s}_i) u_i  \\
  & = \sum_{i=1}^n  \mu(\omega_i) \, \mu(\hat{s}_i \mid \omega_i) u_i \\
  & = V(\hat{\pi} \mid \pi). 
\end{split}
\end{equation*}
\end{proof}

\subsection{Proof of Proposition ~\ref{generalproposition}} \label{appendix:generalproposition}

\begin{proof}
Consider the following menu 
\begin{equation*}
    \mathcal{M} = \{(\hat{\pi}, t(\pi)
    ) \mid \pi \in \Pi\} \quad \text{ where } t(\pi) = \overline{U} -U(\pi). 
\end{equation*}
Note that this is the maximum possible revenue the data broker can achieve, since each type ends up with full information and pays their maximum valuation for that information. Lemma~\ref{generallemma} shows that this menu is incentive compatible, as no type benefits from purchasing another type's signal-price pair. 
\end{proof}

\newpage

\end{document}